\newtheorem{theorem}{Theorem}
\newtheorem{corollary}{Corollary}
\newtheorem{lemma}{Lemma}
\newtheorem{definition}{Definition}
\newtheorem{example}{Example}
\newcommand{\ket}[1]{\left| #1 \right\rangle}
\newcommand{\bra}[1]{\left\langle #1\right |}
\def\FF{\mathbb{F}}
\def\sX{\mathsf{X}}
\def\sZ{\mathsf{Z}}
\newcommand{\cH}{{\mathcal{H}}}
\newcommand{\cW}{{\mathcal{W}}}
\def\Tr{\mathop{\rm Tr}\nolimits}
\def\tr{\mathop{\rm tr}\nolimits}
\def\Ed{E_P}
\def\ba{\vec{a}}
\def\bb{\vec{b}}
\def\inc{\mathop{\mathbf{I}}\nolimits}
\def\cin{\mathop{\mathbf{CI}}\nolimits}
\def\qin{\mathop{\mathbf{QI}}\nolimits}
\def\Label#1{\label{#1}\ [\ \text{#1}\ ]\ }
\def\Label{\label}
\begin{document}
\sloppy

\title{Single-Shot Secure Quantum Network Coding for General Multiple Unicast Network with Free One-Way Public Communication}

\author{
\thanks{
The material in this paper was presented in part at 
the 10-th International Conference on Information Theoretic Security  (ICITS 2017), Hong Kong (China),  November 29 - December 2, 2017 \cite{ICITS}.
}
Go Kato\thanks{The first author is with NTT Communication Science Laboratories, NTT Corporation, Japan, e-mail:kato.go@lab.ntt.co.jp.},
Masaki Owari\thanks{The second author is with Department of Computer Science, Faculty of Informatics, Shizuoka University, Japan, e-mail:masakiowari@inf.shizuoka.ac.jp.},       
and 
Masahito~Hayashi~\IEEEmembership{Fellow,~IEEE}
\thanks{The third author is with the Graduate School of Mathematics, Nagoya University, Japan. He is also with 
Shenzhen Institute for Quantum Science and Engineering,
	Southern University of Science and Technology,
	Shenzhen,
	518055, China,
	Center for Quantum Computing, Peng Cheng Laboratory, Shenzhen 518000, China,
and 
the Centre for Quantum Technologies, National University of Singapore, Singapore, e-mail:masahito@math.nagoya-u.ac.jp}}

\maketitle

\begin{abstract}
It is natural in a quantum network system that multiple users intend to  
send their quantum message to their respective receivers, which is  
called a multiple unicast quantum network. We propose a canonical method  
to derive a secure quantum network code over a multiple unicast quantum  
network from a secure classical network code. Our code correctly  
transmits quantum states when there is no attack. It also guarantees the  
secrecy of the transmitted quantum state
even with the existence of an attack when the attack satisfies a certain  
natural condition.
In our security proof, the eavesdropper is allowed to modify wiretapped  
information dependently on the previously wiretapped messages.
Our protocol guarantees the secrecy by utilizing one-way classical information  
transmission (public communication) in the same direction as the quantum network although the  
verification of quantum information transmission requires two-way  
classical communication.
Our secure network code can be applied to several networks including the  
butterfly network.
\keywords{~
secrecy,
quantum state,
network coding,
multiple unicast,
general network,
one-way public communication}
\end{abstract}

\section{Introduction}
In order to realize quantum information processing protocols to overwhelm the 
conventional information technologies among multiple users,
it is needed to build up a quantum network system among multiple users.
For example, 
various quantum protocols, e.g., 
quantum blind computation \cite{BFK,MF},
quantum public key cryptography \cite{KK},
and quantum money \cite{Wiesner} require 
the transmission of quantum states.
To meet the demand, the paper \cite{Hayashi2007} initiated the study of quantum network coding with the butterfly network as a typical example.
Under this example, the paper \cite{PhysRevA.76.040301} clarified the importance of prior entanglement in a quantum network code
by proposing a network code, which was experimentally implemented recently\cite{Lu}.
Kobayashi et al. \cite{Kobayashi2009} discussed a method for generating GHZ-type states via quantum network coding.
Leung et al. \cite{Leung2010} investigated several types of networks 
when classical communication is allowed.
Based on these studies, Kobayashi et al. \cite{Kobayashi2010} made a code to transmit quantum states based on a linear classical network code.
Then, Kobayashi et al. \cite{Kobayashi2011} generalized the result to the case with non-linear network codes. 
These studies \cite{Kobayashi2009,Leung2010,Kobayashi2010,Kobayashi2011,JFM2011} clarified that quantum network coding is needed among multiple users for efficient transmission of the quantum states over a quantum network.
However, these existing studies did not discuss the security for 
quantum network codes
when an adversary attacks the quantum network.

Since the improvement of the security is one of the most essential requirements for developing quantum networks,
the security analysis is strongly required for quantum network codes. 
Indeed, it is possible to check the security in these existing methods by verifying the non-existence of the eavesdropper.
However, the verification requires us to repeat the same quantum state transmission several times as well as two-way classical communication.
Hence, it is impossible to guarantee the security under 
a single transmission in the simple application of these existing methods.
Therefore, it is needed to propose a quantum network code that guarantees its security. 
That is, our aim is a natural extension of classical secure network coding.

On the other hand, for a classical network, 
Ahlswede et al.\cite{ACLY} started the study of network coding. 
Then, Cai et al. \cite{Cai2002} initiated to address the security of network code, 
and pointed out that the network coding enhances the security.
Currently, many papers \cite{Cai06,Bhattad2005,Liu2007,Rouayheb2007,Harada2008,Ho2008,Jaggi2008,Nutman2008,Yu2008,Cai2011,Cai2011a,Matsumoto2011,Matsumoto2011a} have already studied the security for network codes.  
In these studies, the security was shown against wiretapping on a part of the channels.
Hence, it is strongly needed to propose a quantum network code whose security is guaranteed under a similar setting.
In the previous paper \cite{OKH}, we initiated a study of the security of quantum network codes, and constructed a quantum network code on the butterfly network which is secure against any eavesdropper's attack on any one of quantum channels on the network.
In fact, after the conference version \cite{ICITS} of this paper,
several studies \cite{SH18-1,SH18-2,HS19} investigated the security for the quantum network code
when an adversary attacks the quantum network.
However, they did not discuss a method for converting 
an existing classical network code to a quantum network code.
Our method can be universally applied to any classical network code as follows.

To see our contribution, we explain the characteristics of a quantum network.
Studies on classical network coding have most often discussed the unicast setting, in which,
we discuss the one-to-one communication via the network.
Even in the unicast setting, there are many examples of network codes that overcome the routing, as numerically reported in \cite[Section III]{CF}.
However, in the quantum setting, it is not easy to find such an example in the unicast setting.
As another formulation, 
studies on classical network coding often focus on the multicast setting,
in which one sender sends information to multiple receivers.
However, no-cloning theorem prohibits a straightforward extension of classical multicast network coding to quantum multicast network coding, even though there exists various types of quantum multicast communication protocols utilizing classical multicast network codes \cite{Kobayashi2009,KOM,KOM2,HO}.
Hence, we discuss the multiple-unicast setting, which has multiple pairs composing of senders and receivers, 
since we can construct a problem setting of quantum multiple-unicast network coding as a straightforward extension of the problem setting of classical multiple-unicast network coding.   
In addition, the multiple-unicast setting has not been well examined even in the classical case, 
i.e., it has been discussed only in a few papers such as Agarwal et al. \cite{Agarwal} with the classical case.

In this paper, we generally construct a quantum linear network code 
in the multiple-unicast setting whose security is guaranteed.
Our code is canonically constructed from a classical linear network code in the multiple-unicast setting, 
and it certainly transmits quantum states when there is no attack.
Our main issue is the secrecy of the transmitted quantum states 
when Eve attacks only edges in the subset $E_A$ of the set of edges of the given network.
That is, we show the secrecy of our quantum network code
when the secrecy and recoverability of the corresponding classical network are shown against Eve's attack on the subset $E_A$ of edges.
That is, we clarify the relation between quantum secrecy and the pair of classical secrecy and classical recoverability in the network coding.
We also give several examples of such secure quantum network codes.
Indeed, it is not so easy to satisfy this condition for the corresponding classical network.
Hence, we allow several nodes in the network to share common randomness, which is called shared randomness, and assume that Eve priorly does not have any information about this shared randomness.
Since a quantum channel is much more expensive than a classical public channel,
we assume that the classical one-way public channel can be freely and unlimitedly used
from each node only to terminal nodes which are in the directions of the subsequent quantum communications.
Under this assumption, 
the transmission of a quantum state from a source node to the corresponding terminal node
is equivalent to sharing a maximally entangled state via quantum teleportation \cite{PhysRevLett.70.1895}.
Hence, we show the reliability of the  transmission by proving that an entangled state can be shared by sending entanglement halves from source nodes.
Our general construction covers the previous code for the butterfly network in \cite{OKH}.

Here, we emphasize the difference between our offered security from 
the conventional quantum security like quantum key distribution (QKD),
which essentially verifies the noiseless quantum communication.
In QKD, for this verification, we need two-way classical communication, which enables us to verify the non-existence of the eavesdropper and to ensure the security.
However, our analysis can guarantee the security only with
one-way classical communication
because we assume that the eavesdropper wiretaps only a part of the channels.
Also, the verification in QKD can be done under an asymptotic setting
with repetitive use of quantum communications.
In contract, our security analysis holds even with the single-shot setting
without such repetitive use.

The remaining part of this paper is organized as follows.
Section \ref{sec:Preliminary} prepares several pieces of knowledges for secure classical network coding including 
secrecy and recoverability.
Section \ref{S6} provides our general construction of secure quantum network coding and 
shows the secrecy theorem.
Section \ref{S7} discusses several additional examples of secure quantum network coding.
Appendix \ref{AL} gives several lemmas used in these examples.
Appendix \ref{AConst} gives the precise constructions of the matrices appearing in the main body.

\section{Preparation from secure classical network coding}\Label{sec:Preliminary}
In this section, we introduce classical network coding and its secrecy
and recoverability
analysis which is necessary for analyzing the security 
of
 the derived 
quantum network codes in the next section. 

\subsection{Classical linear multiple-unicast network coding}
The quantum multiple-unicast network codes can
 be derived from any classical  linear multiple-unicast network codes with shared-secret randomness, where 
the linearity condition is imposed for the operations on all the nodes.
In the classical setting of the network coding, 
the network is
expressed by
 a directed graph $(\tilde{V}, \tilde{E})$. 
The set of vertices $\tilde{V}$
indicates
   the set of nodes which are the senders and  receivers of communications.
The set of edges $\tilde{E}$ 
indicates
 the set of the communication channels,
i.e., the set of packets.
When a single character in $\FF_q$ is transmitted from a vertex $u \in \tilde V$ to another vertex $v \in \tilde V$ via a channel
in the classical network code, the channel is 
indicated by
 $(u,v) \in \tilde{E}$ in the directed graph. Note that
 $\FF_q$ is the finite field whose order $q$ is a prime power.

The purpose of 
the multiple-unicast network coding is that
 the nodes cooperatively   transmit  the $n$ messages from 
a part of the nodes (called source nodes) to other part of  nodes (called terminal nodes).
For individual message, 
the  pair of the source node and the terminal node are predefined.
A single source or terminal node may appear multiple times in the set of 
the  
pairs.
In other words,
 a source node may be required to send messages to plural terminal nodes, and plural source nodes may be required to send messages to an identical terminal node. 
As you can find,
 our setting includes the unicast setting as well. 
In our classical network code setting, we consider the situation that part of communications are eavesdropped by Eve. 
In order to make the code secure, i.e. to prevent the leakage of information correlated to the messages,  
$n'$ shared-secret randomnesses are used. 
The nodes which use any of the randomnesses  are called shared-randomness nodes.
To make the following discussion clearer, we
  denote the sets of source nodes, terminal nodes, and shared-randomness nodes by $V_{S}$, $V_T$, and $V_{SR}$.

In fact, the notation defined above is not enough to analyze the network code systematically, especially in the case of the derived quantum network code. 
Therefore, we will extend the structure of the network.

\subsubsection{ Definition of sets which characterize the extended network}
As an extension of  the network,
 we virtually 
 introduce 
additional  
input vertices, output vertices, and shared-randomness vertices,
such that 
there is one-to-one correspondence between the $j$-th message and the pair of the input vertex $i_{j}$ and 
the output vertex $o_{j}$ for $1 \leq j\leq n$, and there is  one-to-one correspondence between the 
$j$-th shared-secret randomness and the shared-randomness vertex $r_j$ for $1 \leq j\leq n'$.
In the following,
 we
 denote the sets of input vertices, output vertices, and shared-randomness vertices by 
 $V_I=\{i_{1},\cdots,i_{n}\}$, $V_O=\{o_{1},\cdots,o_{n}\}$, and $V_R=\{r_{1},\cdots,r_{n'}\}$,
  respectively.
Now, we give the set of vertices for the extended network 
 as   
$V:= \tilde{V}\cup V_I\cup V_O\cup V_R$, where these sets have no intersection.

Next, we virtually add input edges, output edges, and shared-randomness edges which connect between virtual vertices and the nodes in $\tilde V$ 
so as to satisfy the following conditions:
Any input vertex is connected only by an input edge to the source node which possesses  the corresponding message initially in the classical  network code. 
Any output vertex is connected only by an output edge from the terminal node which 
  receives the corresponding message finally in the  network code.
 And, any shared-randomness vertex $r_j$ is connected only by $l_j$ shared-randomnesses edges to  
all the shared-randomness nodes where the corresponding shared-secret randomness is distributed initially in the  network code.
From now on,
 we denote the sets of input edges, output edges, and shared-randomness edges by $E_I$, $E_O$, and $E_R$, respectively.
From these definitions, we know that
 $|E_I|=|E_O|=n$
and 
 the number $|E_{R}|$ of shared-randomness edges is $l:=\sum_{j=1}^{n'}l_j$.
Now, we give the set of edges for the extended network as   
$E:= \tilde{E}\cup E_I\cup E_O\cup E_R$.
Note that, these sets have no intersection, so, $|E|= N+2n+l$
where $N$ is the number of edges $|\tilde{E}|$ for the original network.
In the following, we doesn't distinguish the edges and the corresponding communication channels.
To clarify what we have defined, we show typical relations for the set defined above:
\begin{align}
V_{S} &
 =\left\{ v\in V|\exists u\in V_{I}\ s.t.\ \left(u,v\right)\in E\right\},\nonumber \\
V_{SR} &
 =\left\{ v\in V|\exists u\in V_{R}\ s.t.\ \left(u,v\right)\in E\right\},\nonumber \\
V_{T} &
 =\left\{ u\in V|\exists v\in V_{O}\ s.t.\ \left(u,v\right)\in E\right\},
\nonumber \\
E_{I} & =\left\{ \left(u,v\right)\in E|u\in V_{I},v\in V\right\} ,\nonumber \\
E_{O} & =\left\{ \left(u,v\right)\in E|u\in V,v\in V_{O}\right\} ,\nonumber \\
E_{R} & =\left\{ \left(u,v\right)\in E|u\in V_{R},v\in V\right\} .\Label{eq:def E-I E-O E-R}
\end{align}
Some numbers
defined above
 are summarized in Table \ref{T1} for convenience.
\begin{table}[htpb]
  \caption{Characteristic numbers of the network coding.
Notations undefined here will be defined later.}
\Label{T1}
\begin{center}
  \begin{tabular}{|c|l|} 
\hline
\multirow{4}{*}{$n$} & No. of input edges. $|E_I|$ \\
&  No. of output edges. $|E_O|$  \\
&  No. of input vertices. $|V_I |$  \\
&  No. of output vertices. $|V_O|$  \\
\hline
$l $ & No. of shared-randomness edges. $|E_R|$ \\
\hline
$n'$ & No. of shared-randomness. $|V_R |$\\
\hline
$N$ & No. of
   edges
 in the original network. $|\tilde{E}|$\\
\hline
$h$ & No. of edges attacked by Eve. $|E_A|$\\
\hline
$h'$ & No. of protected edges.  $|\Ed|$\\
\hline
  \end{tabular}
\end{center}
\end{table}

\subsubsection{ Definition of maps which characterize the network code}

To identify the ordering of the channels,
 we define a map $\mathbf{e}$ from 
$\left\{ 1,\cdots,N+2n+l\right\} $
to $E$ as follows.
For $1\leq j\leq n$, 
$\mathbf{e}\left(j\right)$ is an input edge
going out from an input vertex $i_{j}$.
For $1\leq j\leq n'$, 
$\mathbf{e}\left(n+1+\sum_{k=1}^{j-1}l_{k}\right),\cdots,
\mathbf{e}\left(n+\sum_{k=1}^{j}l_{k}\right)$
are sheared-randomness edges
going out from 
a sheared-randomness vertex
$r_{j}$.
 $\mathbf{e}\left(n+l+1\right),\cdots,\mathbf{e}\left(N+n+l\right)$
are edges in the directed graph $(\tilde{V},\tilde{E})$ which is  originally defined  in 
the classical
 network coding.
For $1\leq j\leq n$, 
 $\mathbf{e}\left(N+n+l+j\right)$ is an output
edge going into an output vertex $o_{j}$.
We consider that, if $j<k$, the channel $\mathbf{e}(j)$ is used before the channel $\mathbf{e}(k)$ is used, 
and we regard $j$ as the time $t$ when the channel $\mathbf{e}(j)$ is used.

For an edge 
$\mathbf{e}\in E$,
 we denote its input and output vertices by 
$\mathbf{v}_{I}\left(\mathbf{e}\right)$
and $\mathbf{v}_{O}\left(\mathbf{e}\right)$.
The definition means that the relation 
$\mathbf{e}=\left(\mathbf{v}_{I}\left(\mathbf{e}\right),\mathbf{v}_{O}\left(\mathbf{e}\right)\right)$
holds.
Note that: though we use the same character $\mathbf e$ for  both a ``edge'' variable   in $E$ and a function defined in the previous paragraph, 
we don't mention about it below if it is easy to identify which it means. 
Next, we define a map from integers $j$
to
 the set
of natural numbers identifying the edges that have transmitted their
contents to the vertex $\mathbf{v}_{I}\left(\mathbf{e}\left(j\right)\right)$
before the 
time $t=j$:
\begin{equation}
\inc\left(j\right):=\left\{ k\in\mathbb{N}|k<j,\ \exists v\in V,\ s.t.\ \mathbf{e}\left(k\right)=
\left(v,\mathbf{v}_{I}\left(\mathbf{e}\left(j\right)\right)\right)\right\} ,\Label{eq:def in j}
\end{equation}
which enables us to make expressions simple.

We consider that,
at the $j$-th channel, 
a random variable $Y_j$ is imputed as a transferred content in the  network.
Therefore, at the time $t=j$, $Y_{j}$ is generated at the node $\mathbf{v}_{I}\left(\mathbf{e}\left(j\right)\right)$,
and, transmitted it via the channel $\mathbf{e}\left(j\right)$.
Immediately after
the time, the value
 is received by the vertex $\mathbf{v}_{O}\left(\mathbf{e}\left(j\right)\right)$
 if there is no disturbance.
 In fact, we will consider the case that Eve may disturb the contents of part of channels later.
Note that,
in the case of $1\leq  j\leq n$  or $N+n+l< j\leq N+2n+l$,
we consider that the $j$-th channel virtually transfers a message  which is initially occupied at the corresponding source node or is finally reconstructed at the corresponding terminal node respectively. Furthermore,  in the case of  $n< j\leq n+l$,
we consider that the $j$-th channel virtually transfers a sheared-secure randomness 
which is sheared  at the corresponding sheared-randomness nodes.
This interpretation indicates that 
 $\{Y_j\}_{ j\in\{1,\cdots, n\}}$ are the  messages,
 and
$\{Y_j\}_{j\in\{n+1,\cdots , n+l\}}$ are the sheared-secret randomnesses,

In order to fix the 
classical linear network codes, 
the rest of work we have to do is to give the way to generate the random variables $Y_j$  which is transferred by the $j$-th  channel $\mathbf{e}\left(j\right)$ for $n+l< j\leq N+2n+l$.  Recall that
 we impose the linearity condition on the operations on all the nodes, and, 
only the random variables $\{Y_k\}_{k \in \inc\left(j\right)}$ can be used for generating $Y_j$
 at the vertex $\mathbf{v}_{I}(\mathbf{e}\left(j\right))$.
Therefore, 
  $Y_{j}$ can be evaluated as a linear combination 
   $\sum_{k\in \inc\left(j\right)}\theta_{j,k}Y_{k}$ for appropriate constants $\theta_{j,k} \in \FF_q$.
  We can easily check that 
 the set 
 $\left\{ \theta_{j,k}\right\} _{j\in\left\{n+l+1,\dots,\left|E\right|\right\}, k\in \inc\left(j\right)}$
completely 
identifies
 the 
linear multiple-unicast coding on the given network.
For convenience, we define $\theta_{j,k}=0$ for 
$k\nsubseteq \inc\left(j\right)$ so that we have
\begin{equation}
Y_{j}=
\sum_{k\in \inc\left(j\right)}\theta_{j,k}Y_{k}=
\sum_{k<j}\theta_{j,k}Y_{k}.\Label{eq:y(i)=00003Dsum a-ij y(j)}
\end{equation}
for $n+l+1\leq j\leq N+2n+l$.
Note that, the above relation doesn't hold if there is disturbance on the channels, e.g. attacks by Eve, 
because, in that case, no one guarantee that the received content from the edge $\mathbf e(j)$ is equal to the sent content into the      the edge $\mathbf e(j)$, i.e. $Y(j)$.

\begin{figure}[tbh]
\begin{center}
\includegraphics[scale=0.6]{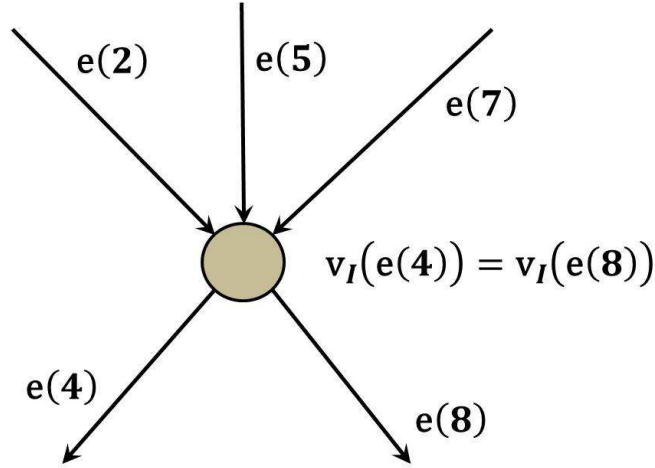}
\caption{Example of a vertex and connecting edges}
\Label{fig:vertex-edges}
\end{center}
\end{figure}

\begin{example}
As an example,  a local structure for a network defined above is depicted in the Figure \ref{fig:vertex-edges}, i.e.  a vertex and
connecting edges. The edges $\mathbf{e}\left(2\right)$, $\mathbf{e}\left(5\right)$,
and $\mathbf{e}\left(7\right)$ go into the vertex and the edges $\mathbf{e}\left(4\right)$
and $\mathbf{e}\left(8\right)$ go out from the vertex. 
Both  $\mathbf{v}_{I}\left(\mathbf{e}\left(4\right)\right)$
and 
$\mathbf{v}_{I}\left(\mathbf{e}\left(8\right)\right)$ indicate
the vertex.
At 
the time $t=4$,
 the content from $\mathbf{e}\left(2\right)$ has
arrived, but the contents from $\mathbf{e}\left(5\right)$ and $\mathbf{e}\left(7\right)$ have not yet. 
Since $\mathbf{e}\left(1\right)$
and $\mathbf{e}\left(3\right)$, which do not appear in Figure \ref{fig:vertex-edges},
do not connect to $\mathbf{v}_{I}\left(\mathbf{e}\left(4\right)\right)$,
the operation on $\mathbf{v}_{I}\left(\mathbf{e}\left(4\right)\right)$
is determined by $\theta_{4,2}$
only, and $\left\{ \theta_{4,j}\right\} _{j<4}$
can be written as 
\[
\left\{ \theta_{4,j}\right\} _{j< 4}:=\left(0,\theta_{4,2},0\right).
\]
Similarly, at
the time $t=8$, all the
  contents from $\mathbf{e}\left(2\right)$,
$\mathbf{e}\left(5\right)$, and $\mathbf{e}\left(7\right)$ have been received
at $\mathbf{v}_{I}\left(\mathbf{e}\left(8\right)\right)$. Thus, the content
sent by $\mathbf{e}\left(8\right)$ can be written as $\sum_{j<8}\theta_{8,j}Y_{j}$, 
where $Y_{j}$ is content received from $\mathbf{e}\left(j\right)$, 
and 
$\left\{ \theta_{8,j}\right\} _{j<8}=\left(0,\theta_{8,2},0,0,\theta_{8,5},0,\theta_{8,7}\right)$. 
\end{example}

Due to the linear structure given in \eqref{eq:y(i)=00003Dsum a-ij y(j)},
the random variables $Y_j$ is given as a linear combination of
the messages $\Vec{A}:=(A_{1},\cdots,A_{n})$ 
given
 in the input vertices 
and the shared-secure-random variables $\Vec{B}:=(B_{1},\cdots,B_{n'})$ generated at the shared-randomness vertices virtually.
For simplicity, combining these random variables,
we define the random vector $\vec{X}':=(\vec{A},\vec{B})=(X_1',\cdots, X_{n+n'}')$.
From the constants $\left\{ \theta_{j,k}\right\} _{j\in\left\{1+n+l,\dots,\left|E\right|\right\},k\in \inc\left(j\right)}$,
we can  uniquely construct  an $\FF_q$-valued
$(N+2n+l)\times (n+n') $ matrix
 $M_0$ whose $(j,k)$ element is $m_0\left(j,k\right)$
  such that
\begin{equation}
Y_{j}=\sum_{k=1}^{n+n'}m_0\left(j,k\right)X_{k}',
\Label{H204}
\end{equation}
if there is no disturbance.
The concrete construction of $M_0$ is given in Appendix \ref{a1}.
Since $\mathbf{e}\left(j\right)$
 is an output edge for $N+n+l+1\le j\le N+2n+l$, the corresponding 
elements
 $m_{0}\left(j,k\right)$ must satisfy 
\begin{align}
\left\{ m_{0}\left(N+n+l+j',k'\right)\right\} _{k'=1}^{n+n'} & =(\vec 0_{j'-1},1,\vec 0_{n+n'-j'})
\Label{eq:M(n+l+N+c k) for 1<c<n}
\end{align}
for $1\le j' \le n$. Note that  we use the notation $\vec 0_j:=(\overbrace{0,\cdots,0}^{j})$.
Rigorously writing,  we can define a multiple-unicast network code by the following condition: 
\begin{definition}
\Label{Def A-network-code}
Let $(E,V)$ be a directed network with ordered edges, $n$ and $l$ be constants which satisfies $2n+l\leq |E|$, and $I(j)$ be a map defined by Eq. (\ref{eq:def in j}).
A network code $\left\{ \theta_{j,k}\right\} _{j\in\left\{1+n+l,\dots,\left|E\right|\right\},k\in \inc\left(j\right)}$
is called a multiple-unicast network code 
if the coefficients $\left\{ m_{0}\left(j,k\right)\right\} _{j,k}$
satisfy Eq.\eqref{eq:M(n+l+N+c k) for 1<c<n}, where the coefficients
$\left\{ m_{0}\left(j,k\right)\right\} _{j,k}$ are defined by Eq.
\eqref{H204}.
\end{definition}

\subsection{Secrecy of classical multiple-unicast network code}

In this subsection, we analyze  the secrecy of the classical network code. The analysis 
 is necessary to derive the main results regarding quantum network codes. 
Although there are a lot of existing works on secrecy of classical network coding \cite{Cai2002,Bhattad2005,Liu2007,Rouayheb2007,Harada2008,Ho2008,Jaggi2008,Nutman2008,Yu2008,Cai2011,Cai2011a,Matsumoto2011,Matsumoto2011a},
they don't discuss the case when an adversary called ``Eve'' 
disturbs the contents on the part of channels as well as she wiretaps  the part of channels.
Only the paper \cite{HOKC} discusses such an adversary, though its analysis is limited to the unicast case.

\subsubsection{ Definitions related to Eve's attack}

We define $E_{A}\subset\tilde{E}$ 
 as the set of edges attacked by Eve, 
and $h$ is 
 the size of 
the set,
 \emph{i.e., $h:=\left|E_{A}\right|$}, respectively.
 Note that, since all the edges in the set $E\backslash \tilde E$ are virtual ones, Eve can't access the edges.
Eve is assumed to be able to eavesdrop and disturb the contents  on all the channels in $E_A$. 
Eve also knows the network structure, i.e., the topology of network and all the coefficients $\{\theta_{j,k}\}_{j,k}$.
In order to make expressions simply, we define a strictly increasing function $\varsigma\left(j\right)\in\mathbb{N}$
so that $E_{A}$ can be written as $E_{A}=\left\{ \mbox{\ensuremath{\mathbf{e}\left(\varsigma\left(j\right)\right)}}\right\} _{j=1}^{h}$.
That is, 
the target of  the $j$-th Eve's attack is
the edge $\mathbf{e}\left(\varsigma\left(j\right)\right)$.
In order to analyze such a situation, we introduce other random variables:
the wiretapped 
 random variable $Z_j:=Y_{\varsigma\left(j\right)}$
from the communication identified by  the edge $ \mathbf{e}\left(\varsigma\left(j\right)\right)$, and the  injected
     random variable $C_j$ to the vertex 
$ \mathbf{v}_{O}\left(\mathbf{e}\left(\varsigma\left(j\right)\right)
\right)$ instead of $Z_j$.
In order to simplify the following discussion, we define two random variable vectors:
 $\vec{C}:=(C_1, \ldots, C_h)$ 
and
 $\vec{X}:=(\vec{A},\vec{B},\vec{C})=(X_1,\cdots ,X_{n+n'+h})$.
Due to the linear structure of the network, 
there uniquely exists an $\FF_q$-valued
$(N+2n+l)\times (n+n'+h) $ matrix $M$ 
  whose $(j,k)$ element is $m\left(j,k\right)$
 satisfying 
 that 
the input information $Y_{j}$ of the edge $\mathbf{e}(j)$ can be expressed by 
\begin{equation}
Y_{j}=\sum_{k=1}^{n+n'+h}m\left(j,k\right)X_{k}
\Label{eq:sec classical subsec security y i =00003D M i k x k},
\end{equation}
 when the contents on the edges $E_A$ are disturbed  by  Eve.
The concrete construction of $M$ is given in Appendix \ref{a2}.
 
When we name 
 the received content from the edge $\mathbf{e}(j)$ 
  as
 $Y_j'$, the random variable can be defined as 
\begin{align}
Y_j':=\left\{
\begin{array}{ll}
C_k & \hbox{when there exists }j \hbox{ satisfying }j=\varsigma\left(k\right)\\
Y_j & \hbox{otherwise}.
\end{array}
\right.
\end{align}
 We can easily define 
 the $\FF_q$-valued
$(N+2n+l)\times (n+n'+h) $ matrix $M'$ 
 which gives 
$Y_j'$ from $\vec X$ as
\begin{equation}
Y_{j}'=\sum_{k=1}^{n+n'+h}m'\left(j,k\right)X_{k},
\Label{eq:def y'}
\end{equation} 
where $m'\left(j,k\right)$ is a $(j,k)$ elements of the matrix $M'$.
That is
\begin{align}
m'\left(j,k\right) 
:=&\left\{
\begin{array}{ll}
\delta_{k,n+n'+j'} & \hbox{when there exists }j' \hbox{ satisfying }j=\varsigma\left(j'\right)\\
m\left(j,k\right) & \hbox{otherwise}.
\end{array}
\right.\Label{eq:M' ik =00003D delta M ik}
\end{align}

From now on, we fix the set of edges where Eve attacks, i.e. $E_A$, and the network code, i.e. $\{\theta_{j,k}\}_{j,k}$. That means,  
the matrices $M_0$, $M$, $M'$, and the maps $\varsigma$, $\mathbf{e}$, $\mathbf{v}_{I}$, $\mathbf{v}_{O}$
 are fixed.

\subsubsection{ Categorization of Eve's attack}

In order to reduce complicate Eve's attack into simple one,
we
 categorize her attack into three types: \emph{simple attack, deterministic
attack} and \emph{probabilistic attack}:\\
{\bf Simple attack}: A simple attack is an attack in which Eve just
deterministically chooses her injecting value $
\vec{C}=\left\{ X_{n+n'+j}\right\} _{j=1}^{h}$ as a constant.
Therefore, the injected value is independent from the 
wiretapped
values $\left\{ Z_j\right\} _{j=1}^{h}$.
\\
{\bf Deterministic attack}: 
A deterministic attack is defined by
a set of functions $\left\{ g_{j}\right\} _{j=1}^{h}$:
\[
g_{j}:\mathbb{F}_{q}^{j-1}\rightarrow\mathbb{F}_{q},
\]
where $g_j$ is not restricted to a linear function.
The function $g_j$ gives Eve's $j$-th injected value $C_j=X_{n+n'+j}$
generated from the wiretapped variables $\{Z_k\}_{k=1}^{j-1}$ in her hand
 as
\begin{align*}
C_j =
g_{j}\left(
\{Z_k\}_{k=1}^{j-1}
\right) 
 &
  =g_{j}\left(\left\{ \sum_{k'=1}^{n+n'+h}m\left(\varsigma\left(k\right),k'\right) 
X_{k'}\right\}_{k=1}^{j-1}\right).
\end{align*}
This attack is a special case of the {\it causal strategy } defined in the paper \cite{HOKC}.
We write the set of all deterministic attacks as $\mathcal{G}$, i.e. all the set of $\left\{ g_{j}\right\} _{j=1}^{h}$.
Note that a simple attack is also a deterministic attack.

{\bf Probabilistic attack}: A probabilistic attack is an attack
in which Eve probabilistically chooses one of the deterministic attacks
$\left\{ g_{j}\right\}_{j=1}^h$ and applies it.
Hence, a probabilistic attack is determined by a probability distribution
$P_{\mathbf{G}}\left(
\left\{ g_{j}\right\}_{j=1}^h
\right)$ on
the set of all deterministic attacks $\mathcal{G}$, where $\mathbf{G}$
is the corresponding random variable. Note that a deterministic attack
$\left\{ g_{j}\right\}_{j=1}^h$
 is a special probabilistic attack
whose probability distribution satisfies 
$P_{\mathbf{G}}\left(
\left\{ g_{j}\right\}_{j=1}^h
\right)=1$
and $P_{\mathbf{G}}\left(
\left\{ g'_{j}\right\}_{j=1}^h
\right)=0$
for
 any other 
  deterministic attack 
$
\left\{ g'_{j}\right\}_{j=1}^h
$.

Note that, even in the case of probabilistic attack,  the set of edges where Eve attacks, i.e. $E_A$,  is fixed.

\subsubsection{ Reduction of complex Eve's attacks into simple ones}

First, we consider the deterministic attack.
In this case,
 any attack
 can be reduced to 
  a simple 
  attack with $\vec{C}=\vec 0=(0,0,\cdots)$, i.e.
  for any deterministic attack, there is a simple attack with $\vec{C}=\vec 0$ where 
  Eve  can get the same information with both strategies.
The reason is as follows. For the original deterministic attack $\{g_j\}_{j=1}^h$, 
Eve's information is given as 
$\left\{ Z_j\right\} _{j=1}^{h}$. In the case of the simple attack, i.e. $\vec{C}=\vec 0$,
 we denote Eve's information  by
$\left\{ \tilde{Z}_{j}\right\} _{j=1}^{h}$ .
Due to the linearity of the network, we have
\begin{align}
Z_j = \tilde{Z}_j
+ \sum_{k=1}^{h}m(\varsigma\left(j\right),n+n'+k) g_k(\left\{ Z_{k'}\right\}_{k'=1}^{k-1}),
\label{G_eq:1}
\end{align}
for $1\leq j\leq h$. As you can check, all the elements $m(j,k)$ are defined by the network code and the set of edges where Eve attacks.
 This fact guarantees  that we can solve the eq. ($\ref{G_eq:1}$)
with respect to $\{Z_k\}_{k=1}^j$.
This fact can be rewritten as the following lemma.
\begin{lemma}[\protect{\cite[Theorem 1]{HOKC}}]\Label{L2-4}
Any deterministic attack 
can be reduced to a simple attack with $\vec{C}=\vec 0_h$.
Since any probabilistic attack is given as a probabilistic mixture of 
deterministic attacks,
it can also be reduced to the simple attack with $\vec{C}=\vec 0_h$.
\end{lemma}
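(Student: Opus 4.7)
\medskip
\noindent\textbf{Proof plan.}
The plan is to prove the deterministic statement by explicit simulation and then to reduce the probabilistic case to it by convex combination.

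First, I would fix an arbitrary deterministic attack $\{g_j\}_{j=1}^h$ and let $\{Z_j\}_{j=1}^h$ denote the wiretapped contents under that attack. In parallel I would consider the simple attack $\vec{C}=\vec 0_h$ and let $\{\tilde Z_j\}_{j=1}^h$ denote the wiretapped contents in that execution. Using the linearity of the matrix $M$ from \eqref{eq:sec classical subsec security y i =00003D M i k x k} and the fact that the only difference between the two runs is the choice of the injection coordinates $X_{n+n'+k}=C_k$ (which equal $0$ in the simple run and $g_k(\{Z_{k'}\}_{k'<k})$ in the deterministic run), I would read off the identity \eqref{G_eq:1}, namely
\[
Z_j \;=\; \tilde Z_j \;+\; \sum_{k=1}^{h} m\bigl(\varsigma(j),\,n+n'+k\bigr)\,g_k\bigl(\{Z_{k'}\}_{k'=1}^{k-1}\bigr),
\]
treating $g_k$ exactly as in Eve's chosen strategy.

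The key structural observation I would then stress is causality of the network code: the injection $C_k$ enters the network at time $\varsigma(k)$, so it can influence $Y_{\varsigma(j)}$ only when $\varsigma(j)>\varsigma(k)$, i.e.\ only when $j>k$ (because $\varsigma$ is strictly increasing). Consequently $m(\varsigma(j),n+n'+k)=0$ for $k\ge j$, so the above sum runs only over $k<j$. This triangular structure allows a recursive reconstruction: given $\tilde Z_1,\dots,\tilde Z_h$, I would solve for $Z_1,Z_2,\dots,Z_h$ successively by
\[
Z_j \;=\; \tilde Z_j \;+\; \sum_{k=1}^{j-1} m\bigl(\varsigma(j),\,n+n'+k\bigr)\,g_k\bigl(\{Z_{k'}\}_{k'=1}^{k-1}\bigr),
\]
where the right-hand side uses only quantities already computed. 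Hence from the simple-attack transcript $\{\tilde Z_j\}_j$ together with the publicly known data (the code, the set $E_A$, and Eve's chosen functions $g_k$), Eve can compute $\{Z_j\}_j$. This shows that any deterministic attack can be simulated by the fixed simple attack $\vec C=\vec 0_h$.

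Finally, for the probabilistic case, a probabilistic attack is by definition a distribution $P_{\mathbf G}$ over deterministic attacks. Since each deterministic attack in the support is reducible to the simple attack $\vec C=\vec 0_h$ by the argument above, Eve can draw $\mathbf{G}=\{g_j\}_{j=1}^h$ according to $P_{\mathbf G}$ locally (she has private randomness) and then run the simulation; the joint distribution of (attack choice, wiretapped information) is unchanged. The main technical point is really the single observation about the lower-triangular structure forced by causality together with the strict monotonicity of $\varsigma$; everything else is a direct bookkeeping step using \eqref{eq:sec classical subsec security y i =00003D M i k x k}, so I do not expect a substantial obstacle beyond writing this cleanly.
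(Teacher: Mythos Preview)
Your proposal is correct and follows essentially the same approach as the paper: both derive the identity \eqref{G_eq:1} from linearity and then argue that it can be solved recursively for $\{Z_j\}$ in terms of $\{\tilde Z_j\}$. Your version is in fact more explicit than the paper's sketch, since you spell out the causality argument (that $m(\varsigma(j),n+n'+k)=0$ for $k\ge j$ by strict monotonicity of $\varsigma$) which gives the lower-triangular structure enabling the recursion; the paper merely asserts that the equations can be solved.
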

For reader's convenience, we summarize all the random variables we defined
 in Table \ref{T1-1}
\begin{table}[htpb]
  \caption{Random variables of the network coding.}
\Label{T1-1}
\begin{center}
  \begin{tabular}{|c|l|} 
\hline
$ A_j$ & The random variable of the $j$-th message\\
\hline
$ B_j$ & The random variable of the $j$-th shared number\\
\hline
$C_j$ & The random variable injected at the end of edge $\mathbf{e}(\varsigma\left(j\right))$\\
\hline
$X_j $ & An alias of the variable $A_j$, $B_{j-n}$, or $C_{j-n-n'}$
\\
\hline
$Y_j$ & The random variable inputted to the edge $\mathbf{e}(j)$\\
\hline
$Y_j'$ & The random variable outputted from the edge $\mathbf{e}(j)$\\
\hline
$Z_j$ & The random variable wiretapped at the edge $\mathbf{e}(\varsigma\left(j\right))$
\\
\hline
$\tilde Z_j$ &  The random variable wiretapped at the edge $\mathbf{e}(\varsigma\left(j\right))$
under the virtual condition $\vec C=\vec 0_h$
\\
\hline
  \end{tabular}
\end{center}
\end{table}

\subsubsection{ Security analysis for  Eve's attack on $E_A$}

For given $E_{A}$ and the function $\varsigma$,
we define a $h\times\left(n+n'+h\right)$
matrix $M_{\varsigma}$ whose elements are given by 
$\left\{ m\left(\varsigma\left(j\right),k\right)\right\}_{1\le j\le h, 1\le k \le n+n'+h}$.
We further define submatrices of $M_{\varsigma}$ as $M_{\varsigma}=\left(M_{\varsigma,1},M_{\varsigma,2},M_{\varsigma,3}\right)$
where the sizes of $M_{\varsigma,1}$, $M_{\varsigma,2}$, and $M_{\varsigma,3}$
are $h\times n$, $h\times n'$, and $h\times h$, respectively. 
For $ \vec{x}=\left(\vec{a},\vec{b},\vec{c}\right) \in \FF_q^{n+n'+h}$,
the condition 
\begin{align}
z_{j}=\sum_{k=1}^{n+n'+h}m_{\varsigma}\left(j,k\right) x_{k}
\Label{H201}
\end{align}
can be rewritten as
\begin{align}
\vec{z} =M_{\varsigma,1} \vec{a} +M_{\varsigma,2}\vec{b} +M_{\varsigma,3}\vec{c}.
\Label{H201B}
\end{align}
\begin{lemma}\Label{L88}
Secrecy holds for Eve's attack on $E_A$ if and only if the following condition holds:
For any vector $\ba \in \FF_q^n$, there exists a function $\vec {\mathfrak b}(\ba) \in \FF_q^{n'}$ such that
\begin{align}
M_{\varsigma,1} \vec{a} = M_{\varsigma,2}\vec {\mathfrak b}(\ba). 
\label{eq:g_secure_1}
\end{align} 
\end{lemma}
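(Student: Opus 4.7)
The plan is to first invoke Lemma \ref{L2-4} to reduce Eve's general probabilistic attack to the simple attack with $\vec{C} = \vec{0}_h$. Under that reduction the wiretapped vector $\vec Z$ obeys \eqref{H201B} with $\vec c = \vec 0$, i.e.
\begin{equation*}
\vec Z \;=\; M_{\varsigma,1}\vec A + M_{\varsigma,2}\vec B,
\end{equation*}
so proving the lemma reduces to characterizing, in purely linear-algebraic terms, when $\vec Z$ is independent of $\vec A$ given that $\vec B$ is uniform on $\FF_q^{n'}$ and independent of $\vec A$.

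Next I would translate the independence condition into a statement about cosets. For any fixed $\vec a\in\FF_q^n$, the conditional distribution of $\vec Z$ given $\vec A = \vec a$ is the pushforward of the uniform distribution on $\FF_q^{n'}$ through $\vec b\mapsto M_{\varsigma,1}\vec a + M_{\varsigma,2}\vec b$, which is simply the uniform distribution on the coset $M_{\varsigma,1}\vec a + \mathrm{Im}(M_{\varsigma,2})\subset\FF_q^h$. Secrecy means that this conditional distribution does not depend on $\vec a$, which in turn is equivalent to the equality of all these cosets.

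Now I would finish both directions. For the ``if'' direction, if for every $\vec a$ there exists $\vec{\mathfrak b}(\vec a)$ with $M_{\varsigma,1}\vec a = M_{\varsigma,2}\vec{\mathfrak b}(\vec a)$, then $M_{\varsigma,1}\vec a\in\mathrm{Im}(M_{\varsigma,2})$, so $M_{\varsigma,1}\vec a + \mathrm{Im}(M_{\varsigma,2}) = \mathrm{Im}(M_{\varsigma,2})$ for every $\vec a$; all conditional distributions coincide with the uniform distribution on $\mathrm{Im}(M_{\varsigma,2})$, giving secrecy. For the ``only if'' direction, secrecy forces $M_{\varsigma,1}\vec a + \mathrm{Im}(M_{\varsigma,2}) = \mathrm{Im}(M_{\varsigma,2})$ (by taking $\vec a = \vec 0$ as a reference coset), hence $M_{\varsigma,1}\vec a \in \mathrm{Im}(M_{\varsigma,2})$; picking any preimage yields the desired $\vec{\mathfrak b}(\vec a)$.

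The main obstacle, modest but worth handling carefully, is nailing down the exact operational meaning of ``secrecy'' used here and checking that the reduction from probabilistic attacks to the zero simple attack preserves the equivalence in both directions — in particular that Eve obtaining $\vec Z$ and Eve obtaining $\tilde{\vec Z}$ are informationally equivalent via \eqref{G_eq:1}, so the secrecy condition need only be verified for the single simple attack $\vec C = \vec 0_h$. Once that and the uniformity/independence of $\vec B$ are in hand, the argument is a straightforward coset computation.
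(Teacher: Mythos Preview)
Your proposal is correct and follows essentially the same approach as the paper: reduce to the simple attack $\vec C=\vec 0_h$ via Lemma~\ref{L2-4}, then observe that the conditional law of $\vec Z$ given $\vec A=\vec a$ is uniform on the coset $M_{\varsigma,1}\vec a+\mathrm{Im}(M_{\varsigma,2})$, so secrecy is equivalent to all these cosets coinciding with $\mathrm{Im}(M_{\varsigma,2})$. The paper's proof is more terse but makes exactly the same coset comparison (using $\vec a=\vec 0$ implicitly as the reference) and the same appeal to uniformity of $\vec B$.
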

The condition is trivially equivalent to the condition that the image of $M_{\varsigma,1}$ is contained in that of $M_{\varsigma,2}$.

\begin{proof}
Due to Lemma \ref{L2-4}, it is enough to discuss the case with $\vec{C}=\vec 0$.
When secrecy holds,
$
\{ M_{\varsigma,2}\bb | \bb \in \FF_q^{n'}\}
=
\{M_{\varsigma,1} \vec{a}+ M_{\varsigma,2}\bb | \bb \in \FF_q^{n'}\}
$ for any $\ba \in \FF_q^n$.
The latter set contains 
$M_{\varsigma,1}\ba$, which ensures the existence of $\vec {\mathfrak b}(\ba)$.
When such a function $\vec {\mathfrak b}(\ba)$ exists,
the distribution of $M_{\varsigma,2}\vec{B} $
is the same as that of 
$M_{\varsigma,1} \vec{a}+ M_{\varsigma,2}\vec{B}$.
That is because  the variable $\vec{B}$ is  uniformly  distributed.
This fact implies the secrecy.
\end{proof}

\subsection{Recoverability against Eve's attack\Label{sub:Robustness-against-Eve's}}
For our analysis of
 deriving
 quantum network coding, 
we need  to
introduce the
concept of
recoverability of 
 the 
 classical network code against Eve's attack 
in addition to the secrecy.
The concept of recoverability 
is defined as follows.
We consider the situation that Eve can  disturb contents on the channels in $E_A$ as is done in the case of secrecy analysis.
In other words, she can inject any contents  on the channels in $E_A$.
In such a situation,
we imagine a
  receiver 
Bob
  who can use all the received contents of the the channel identified by a set $E_P\subset E\backslash E_R$.
For convenience,
 we give a name ``protected edges'' to the edges in $E_P$.
  He can additionally access all the shared-random variables,
and can know the set $E_{A}$ and the network structure, i.e. the matrix $M'$.
However, Bob does not know what content is injected in the channel in  the set $E_A$, if the channel is not in the set $E_P$.
In this case, if Bob can reconstruct the original messages, we call that the messages is recoverable from Eve's attack by the protected edges $E_P$.
We will
require 
 the recoverability by a certain subset $E_P$ 
for the security of a deriving quantum network code.

\begin{table}[htpb]
  \caption{Sets of edges  of the network coding.}
\Label{T1-2}
\begin{center}
  \begin{tabular}{|c|l|} 
\hline
$\tilde E$ & The set of edges which express actual channels 
\\
\hline
$E_I$ & 
The set of input edges connected from input vertices 
\\
\hline
$E_O$ & 
The set of output edges connected to output  vertices 
\\
\hline
$E_R$ &
The set of shared-randomness edges connected from  shared-randomness vertices 
\\
\hline
$ E$ & The union of the sets $\tilde E$, $E_I$, $E_O$, and $E_R$
\\
\hline
$ E_P$ & The set of protected edges\\
\hline
$E_A$ & The set of edges attacked by Eve
\\
\hline
  \end{tabular}
\end{center}
\end{table}

Now, we give more rigid definition of the concept of the recoverability.
For the subset $\Ed$, we define
the strictly increasing function $\iota:\ \left\{ 1,\cdots,\left| \Ed \right|\right\} \rightarrow\left\{ 1,\cdots,n,n+l+1,\cdots,N+2n+l\right\} $
which
satisfies $\Ed=\left\{ \mathbf{e}\left(\iota\left(j\right)\right)\right\} _{j=1}^{h'}$, where $h':=\left| \Ed\right|$.
Then, the contents  
$\left\{ Y'_{\iota\left(j\right)}\right\} _{j=1}^{h'}$
received from the protected edges $\Ed$
 can be written as 
\begin{align}
Y'_{\iota\left(j\right)}= 
\sum_{k=1}^{n+n'+h}m_{\iota}'\left(j,k\right)X_{k}, \Label{e14}
\end{align}
where $m_{\iota}'\left(j,k\right):=m'\left(\iota\left(j\right),k\right)$ is a matrix elements of the $h'\times (n+n'+h)$ matrix  $M'_\iota$.
Then, we 
 rigidly
define the concept of the recoverability as follows.

\begin{definition}\Label{DD5}
We call that the messages are recoverable for Eve's attack on $E_A$ by a subset $\Ed$, when 
for any vector $\vec{b}\in \FF_{q}^{n'}$,
there exists a function $f_{\vec{b}}:\ \FF_{q}^{h'}\rightarrow
\FF_{q}^{n}$ such that
\begin{equation}
f_{\vec{b}}\left(
M_{\iota}'\cdot
(\vec{a}, \vec{b}, \vec{c})^T
\right)
=\vec{a}\Label{eq:sec classical subsec robustness f a M i a e =00003D i}
\end{equation}
for any $\vec{a} \in \FF_q^n$ and $\vec{c}\in \FF_q^h$.
\end{definition}
Here, $T$ means the transposition.
Note that the matrix $M_{\iota}'$ is uniquely given only from the matrix $M_0$ and the sets $E_A$ and $E_P$.

The function $f_{\vec{b}}$ is nothing but a decoder of the messages $\vec{A}$ from
the contents received from  $\Ed$.
The function depends on 
 $M_\iota'$ and $\vec b$
 only. 
Since condition 
\eqref{eq:sec classical subsec robustness f a M i a e =00003D i}
does not depend on the choice of $\vec{c}$,
it guarantees the recoverability even when 
Eve chooses $\vec{C}$ depending on her wiretapped variable.  

Notice that this kind of recoverability does not imply the recoverability of the messages by 
terminal nodes, and
 we don't assume the condition $ \Ed \cap E_A=\emptyset$.
 In other words,  a channel corresponding to a ``protected'' edge may be disturbed
by Eve.
Therefore, at the channel  in $ \Ed \cap E_A$,
 Eve can completely control the information obtained by Bob.
 
It is informative to show a toy example of a classical network code in which the contents from the edges in $ \Ed \cap E_A$ are useful to recover the messages.
The example is as follows. A single message $A_1$ is transfer from $s_1$ to $t_1$ via two channels $\mathbf e(2)$ and $\mathbf e(3)$  
simultaneously. There is no randomness. The terminal node $t_1$ sums up the  two received contents and obtains recovered message by dividing it by $2$.  $\mathbf e(1)$ and $\mathbf e(4)$ are the input edge and output edge respectively.
We define the set $E_P$ to be $\{\mathbf e(3),\mathbf e(4)\}$, and consider the case $E_A=\{\mathbf e(3)\}$.
In this case, we can give 
\begin{equation}
M'_{\iota}=
\left(
\begin{array}{cc}
0&1\\
2^{-1}&2^{-1}
\end{array}
\right),
\end{equation}
and we know that   $\vec{A} \in \FF_q^1$ and $\vec{C}\in \FF_q^1$.
Therefore, by selecting the function $f_{\vec b}(\vec y)$ as $2y_2-y_1$, we can check that the message is recoverable for Eve's attack  on  $E_A=\{\mathbf e(3)\}$, though this network code isn't secure against Eve's attack on $E_A$. 
The necessity of the content from the channel   $\mathbf e(3)\in E_P\cap E_A$ is checked from the fact that 
 the  coefficient of  $y_1$ for the function  $f_{\vec b}(\vec y)$ is not 0.

In the end of this section, the defined matrices in this section are summarized in Table \ref{T2}. 
\begin{table}[htpb]
  \caption{Summary of matrices}
\Label{T2}
\begin{center}
  \begin{tabular}{|c|l|l|c|} 
\hline
matrix & input system & output system & equation  \\
\hline
\multirow{2}{*}{$M_0$} & \multirow{2}{*}{messages, shared random variables} & 
inputs of all edges  & \multirow{2}{*}{\eqref{H204}}\\
&  & $=$ outputs of all edges  & \\
\hline
\multirow{2}{*}{$M$}& messages, shared  random variables,  & 
\multirow{2}{*}{inputs of all edges} &\multirow{2}{*}{\eqref{eq:sec classical subsec security y i =00003D M i k x k}}
\\
& Eve's input &  &\\
\hline
\multirow{2}{*}{$M'$} & messages, shared random variables, & 
\multirow{2}{*}{outputs of all edges} &\multirow{2}{*}{\eqref{eq:M' ik =00003D delta M ik}}
\\
&Eve's input & &\\
\hline
\multirow{2}{*}{$M_\varsigma$} & messages, shared random variables,  & 
\multirow{2}{*}{inputs of attacked edges}&\multirow{2}{*}{\eqref{H201}}
\\
&Eve's input& &
\\
\hline
\multirow{2}{*}{$M_\iota'$} & messages, shared random variables,  
& \multirow{2}{*}{outputs of protected edges}
& 
\multirow{2}{*}{\eqref{e14}}
\\
&Eve's input &  &\\
\hline
  \end{tabular}
\end{center}
\end{table}

\section{Secure quantum network coding for general network}\Label{S6}
\subsection{Coding scheme}
In this section, we 
derive a  quantum network code from a linear classical network code, and 
analyze the security of the 
 quantum network coding based on 
 the 
  properties 
of
 the original 
 classical network coding 
  which are discussed 
 in the previous section. 
Quantum network coding can be categorized by the type of classical communication allowed
\cite{Hayashi2007,PhysRevA.76.040301,Kobayashi2009,Leung2010,Kobayashi2010,Kobayashi2011}.
In this paper, we consider the case that
 any authenticated public 
classical communication
 from any nodes 
to the terminal nodes 
is freely available, and all the communication may be eavesdropped by Eve. 
In this case, it is known that, 
 for an arbitrary classical multiple-unicast code on an arbitrary classical network, 
there exists a corresponding quantum multiple-unicast network
code on the corresponding quantum network \cite{Kobayashi2010,Kobayashi2011}.
We start this subsection by extending this known result to the case when shared randomness is employed.

\subsubsection{ The notations defined from the original classical network coding}
 
In the following, we fixed the original classical network code. 
As is done in the previous section, from the classical network code, we define integers
$N$, $n$, $n'$, $l$, $q$, $\{l_k\}_{k=1}^{n'}$,
 sets 
 $\tilde{V}$, $V_S$, $V_T$, $E$, $\tilde{E}$, $E_I$, $E_O$, $E_R$, $E_P$, maps 
$\mathbf{e}$, $\mathbf{v}_{I}$, $\mathbf{v}_{O}$,  $\inc$, and  the coefficients 
$\left\{ \theta_{j,k}\right\} _{j\in\left\{n+l+1,\cdots,|E|\right\},k \in \inc\left(j\right)}$,
 which identify the matrix $M_0$ and its elements $m_0(j,k)$ by Eq.\eqref{H204}.

Other than the above notations, we have to define additional notation of a map $\mathfrak E$
from the element in 
$  E$ to the subset of $V_T$ such that
\begin{align}
\mathfrak E(\mathbf{e}'):=&\{\mathbf{v}_{I}(\mathbf{e}(N+n+l+j))|1\leq j\leq n\land  \exists k, \;(\mathbf{e}'=\mathbf{e}(k)\land m_0(k,j)\neq 0 )\}.
\end{align}
Note that, 
when the content tranferred by the edge $\mathbf{e}'$
depends on some messages, the set $\mathfrak E(\mathbf{e}')$ indicates that of 
 all the terminal nodes where the messages are reconstructed.

\subsubsection{
Considering situation of the quantum network}
 
From the items defined above, we list the conditions of the considering situation as a quantum network:

\begin{itemize}
\item
The number of nodes of the quantum network coding is $N$, and each member is  labelled by an element of $\tilde V$ individually.
\item
The total Hilbert space, which all the nodes treat, is the direct product of the subspaces $\mathcal H_j$
for $1\leq  j\leq n$ or $ n+l< j\leq N+2n+l$.
Every subspace $\mathcal H_j$ is made from a  $q$-dimensional Hilbert space, and  has a computational
basis $\left\{ \ket{k}_j\right\} _{k \in \FF_q}$. 
Every subspace $\mathcal H_j$ of the first $n$ subspaces  is occupied by the node $\mathbf{v}_{O}\left(\mathbf{e}\left(j\right)\right)$ for each $j$.
Every subspace $\mathcal H_j$ of the other  subspaces  is occupied by the node $\mathbf{v}_{I}\left(\mathbf{e}\left(j\right)\right)$  for each $j$.
\item
Initially, there is no correlation, especially no entanglement, between any pair of nodes except for the preshared quantum messages.
\item
At the time $t=j$, 
we can use  a quantum channel identified by $\mathbf{e}\left(j\right)$ 
which transfers the quantum subspace $\mathcal H_j$ from the node
$\mathbf{v}_{I}\left(\mathbf{e}\left(j\right)\right)$ 
to the node $\mathbf{v}_{O}\left(\mathbf{e}\left(j\right)\right)$ where  $n+l< j\leq N+n+l$.
Any channel can be used  only once, and
any channel is an identity channel if the eavesdropper Eve does not attack the channel.
\item
A random number $b_j\in\mathbb F_q$, which is secret from Eve, is sheared by the nodes 
 $\mathbf{v}_{O}\left(\mathbf{e}\left(k\right)\right)$ for $ n+\sum_{j'=1}^{j-1}l_{j'}<k \leq n+\sum_{j'=1}^jl_{j'}$ initially where $1\leq j\leq n'$. Other than the random numbers, the vertex $\mathbf{v}_{O}\left(\mathbf{e}\right)$ shares a secret random number in $\mathbb F_q$ with all vertices in $\mathfrak E(\mathbf{e})$ for every $\mathbf{e}\in E_P$.
\item
Any node can apply any unitary operations and measurements for the occupied quantum subspaces depending on any classical information which the node has at any time.
\item
Any authenticated but public 
 classical communication is freely available from any node to all of the terminal nodes.
That is, each node can freely send classical information to any terminal node, and the information may be revealed to Eve.
\end{itemize}

\subsubsection{ Purpose of the  quantum network coding}

There are two purposes for the multiple-unicast quantum network code.
The first purpose is to send
an arbitrary quantum state on $\mathbb{C}^{q}$ from a source node
$\mathbf{v}_{O}\left(\mathbf{e}\left(j\right)\right)$ 
to a terminal node $\mathbf{v}_{I}\left(\mathbf{e}\left(N+n+l+j\right)\right)$
for all $j\in\{1,2,\cdots, n\}$ through the quantum network simultaneously. 
We call the state a quantum message.
Since any classical communication to terminal nodes is free,
this task is equivalent to constructing the maximally
entangled state  between
a $q$-dimensional subspace in
a source node 
$\mathbf{v}_{O}\left(\mathbf{e}\left(j\right)\right)$ 
 and 
that in
 a terminal node
 $\mathbf{v}_{I}\left(\mathbf{e}\left(N+n+l+j\right)\right)$ for all $j\in\{1,2,\cdots, n\}$.
Second purpose is to prevent the leakage of  any information about the quantum messages to Eve where
she can access all the information transmitted via public classical channel and quantum states as contents on the restricted quantum channels identified by $E_A$. 

 In this paper,  we will show some examples of quantum network codes which satisfies the following two properties.
First, quantum messages can be sent with fidelity 1, if there is no disturbance for any channels.
Second, even if any one or two edges are completely controlled by Eve, i.e. the transmitted contents are completely stolen and other contents are injected on any one or two edges in $\tilde E$, it can be guaranteed that Eve can get no information about the quantum messages.

\subsubsection{Preliminary definition of the quantum network coding}

Before presenting the quantum network code, we give the notations used in it. 
For a subset $D$ of $\{1,\cdots,N+2n+l\}$, we define the subspace 
${\mathcal{H}}_{D}:=\bigotimes_{j\in D}{\mathcal{H}}_{j}$.
For  an $\FF_q$-valued vector $\vec{y}=\left(y_{1},\cdots,y_{N+2n+l}\right) \in \FF_q^{N+2n+l}$,
we abbreviate the state $\bigotimes_{j\in D}\ket{y_j}_{j}$ as  $\ket{\vec{y}}_{D}$. Note that, from this definition,  a single vector has multiple expressions in order to simplify the expressions hereafter.    
To distinguish a classical system from a quantum one easily, we introduce sets
\begin{align*}
\qin\left(j\right) & :=\left\{ k\in \inc\left(j\right)| 1\leq k\leq n\;\lor \;n+l< k\right\} \\
\cin\left(j\right) & :=\left\{ k\in \inc\left(j\right)|n< k\le n+l\right\} ,
\end{align*}
where $\inc\left(j\right)$ is defined by Eq.\eqref{eq:def in j}. 
Using these notations, 
depending on the matrix $\theta= \left\{ \theta_{j,k}\right\}_{j,k}$,
we define the controlled unitary operation
$U_{j }(\theta)$
acting on the Hilbert space 
${\mathcal{H}}_{j}\otimes{\mathcal{H}}_{\qin\left(j\right)}$ as 
\begin{align*}
U_{j}(\theta) 
&=&
q^{-N-2n-l+1+|\qin\left(j\right)|}
\sum_{\vec y\in\mathbb{F}_{q}^{N+2n+l}}
| y_{j} +\sum_{k\in \qin\left(j\right)} \theta_{j,k} y_k\rangle_j
\bra{y_j}_j
\otimes\ket{\vec y}_{\qin(j)}\bra{\vec y}_{\qin(j)} .
\end{align*}
On the space ${\mathcal{H}}_{j}$, whose computational basis is $\{ \ket{y}_{j}\} _{y \in \FF_q}$,
 we introduce  the Fourier basis 
$\{ |\tilde \beta\rangle_{j}\} _{\beta \in \FF_q}$
as 
\[
|\tilde \beta\rangle_{j}:=q^{-1/2}\sum_{y \in \FF_q} \omega^{\tr y\beta }\ket{y}_{j},
\]
where $\omega:=\exp\left(-\frac{2\pi i}{p}\right)$.
Here, $\tr z $ expresses 
the element $\Tr \psi(z) \in \FF_p$, 
where $\psi(z)$ denotes the matrix representation of the multiplication map $x \mapsto zx $ 
which  identifies the finite field $\FF_{q}$ with the vector space $\FF_p^d$, where $d$ is the degree of algebraic extension of $\FF_{q}$, i.e. $p^d=q$.
For the details, see \cite[Section 8.1.2]{Haya2}.
We also define the generalized Pauli operators $\sX_j(x)$ and $\sZ_j(\beta)$ 
as $\sX_j(x):=\sum_{y \in \FF_q}\ket{y+x}_j\bra{y}_j$
and $\sZ_j(\beta):=\sum_{y \in \FF_q}\omega^{\tr y\beta}\ket{y}_j\bra{y}_j$.

\subsubsection{Quantum network code}
Using the notations defined above, we show the multi-unicast quantum network code which transfers the quantum messages  from the space $\bigotimes_{j=1}^n\mathcal H_{j} $ into the space $\bigotimes_{j=1}^n\mathcal H_{N+n+l+j} $. 

\begin{Protocol}[h]
\caption{The quantum network code
 deriving from a general classical linear network code}         
\Label{protocol2}
\begin{algorithmic}
\STEPONE {\bf Initialization}

First, all the spaces $\mathcal H_j$ are initialized to the state $\ket{0}_j$
for $n+l<j\leq N+2n+l$, at each edge.

\STEPTWO {\bf Transmission}

This step consists of $N+n$ substeps.
The $j$-th substeps 
 can be described as follows.
At the time $t=j':=n+l+j$,
 the node $\mathbf{v}_{I}\left(\mathbf{e}\left(j'\right)\right)$
operates the unitary 
\begin{align}
\sX_{j'}( \sum_{k \in \cin\left(j'\right)}
\theta_{j',n+k} b_k )
U_{j'}\left(
\theta
\right)
\label{H3-14}
\end{align}
on
 ${\mathcal{H}}_{j'}\otimes{ \mathcal{H}}_{\qin\left(j'\right)}$ where
 ${\mathcal{H}}_{j'}$ is the controlled system and ${\mathcal{H}}_{\qin\left(j'\right)}$ is the controlling system.
 If $j\leq N$, the
 node
$\mathbf{v}_{I}\left(\mathbf{e}\left(j'\right)\right)$ sends the Hilbert
space ${\mathcal{H}}_{j'}$ to the node $\mathbf{v}_{O}\left(\mathbf{e}\left(j'\right)\right)$
via the quantum channel $\mathbf{e}\left(j'\right)$. 

  Note that, if  the node $\mathbf{v}_{I}\left(\mathbf{e}\left(j'\right)\right)$
 does not share any random number, i.e. $\cin\left(j'\right)=\emptyset$,  the generalized Pauli operator $\sX_{j'}(\cdot)$ in the above relation is considered to be the identity operator.

\STEPTHREE {\bf Measurement on Fourier-basis}

 This step consists of $N+n$ substeps.
The step identified by $j\in\{1,\cdots, n,n+l+1,\cdots N+n+l\}=:G'$
 can be described as follows.
 The node $\mathbf{v}_{O}\left(\mathbf{e}\left(j\right)\right)$
measures the Hilbert space ${\mathcal{H}}_{j}$ in the Fourier basis,
and sends the measurement outcome $\beta_{j}$ to all the terminal nodes in
$\mathfrak E(\mathbf{e}\left(j\right))$. 
Here,
if  $\mathbf{e}\left(j\right) \notin E_P$, 
the outcome is sent by public channel, i.e. the outcome may eavesdropped by Eve, 
and, 
if  $\mathbf{e}\left(j\right) \in E_P$, 
the outcome is sent by the one-time pad, i.e. a secret randomness shared with the vertices in $\mathfrak E(\mathbf{e}\left(j\right))$ is consumed and
the outcome is completely secret from Eve.

\STEPFOUR {\bf Recovery}

For all $j$ satisfying $1\le j\le n$, the terminal node $\mathbf{v}_{I}\left(\mathbf{e}\left(N+n+l+j\right)\right)$
operates 
$\sZ_{N+n+l+j} \left(
\sum_{k\in G'}\beta_{k}m_0\left(k,j\right)
\right)$, where a matrix $M_0$ is defined by Eq.\eqref{H204}. 
\end{algorithmic}
\end{Protocol}
Note that, for all public communications sending  an outcome $\beta_j$ to multiple nodes at a substep in Step 3, we can combine a common single secret randomness for the one-time pad without losing secrecy. Furthermore,  
there is a special case such that  $\mathfrak E( \mathbf{e}(j)  )$  contains only the single node $\mathbf{v}_{O}(\mathbf{e}(j))$.  In that case,  we send 
 the outcome to the node $\mathbf{v}_{O}(\mathbf{e}(j))$ where the outcome is obtained. Therefore, the procedure is equivalent to doing nothing. As a result, we don't have to use any shared randomness even if  $\mathbf{e}(j) \in E_P$  for such a situation.

As you have seen, our protocol depends only on  the set of coefficients $\{\theta_{j,k}\}_{j,k}$ and the set of protected edges $E_{P}$.
That is, our protocol is uniquely determined by the pair of $\{\theta_{j,k}\}_{j,k}$ and $E_{P}$,
and we call it the quantum network code 
$\left\{ \theta_{jk}\right\} _{j\in\left\{n+l+1,\cdots,|E|\right\},k \in \inc\left(j\right)}$
with the set of protected edges $E_{P}$.

\subsection{Validity analysis}

In order to analyze the quantum network coding,
it is convenient to  introduce ancillary set of $q$-dimensional Hilbert spaces ${\mathcal{H}}_{j-n}$ occupied by the source node $\mathbf{v}_{O}\left(\mathbf{e}\left(j\right)\right)$  for $1\le j\le n$. 
Note that we never perform any operations on the ancillary spaces.

As a generalization of \cite[Theorem 1]{Kobayashi2010},
we obtain the following theorem.

\begin{theorem}\Label{T2-5-1}
Suppose that the corresponding classical network coding identified by 
$\{\theta_{j,k}\}_{j,k}$ is a multi-unicast network code.
By Protocol \ref{protocol2}, any quantum message on the space $\mathcal{H}_{j}$
are simultaneously transferred to the space  $\mathcal{H}_{N+n+l+j}$ with fidelity $1$ 
for any $j$ satisfying $1\leq j\leq n$
if no one disturbs the protocol.
That is, 
when the maximally entangled state $q^{-1/2}\sum_{x\in\mathbb F_q}\ket{x}_{j-n}\ket{x}_{j}\in{{\mathcal{H}}_{j-n}\otimes \mathcal{H}}_{j}$ 
is prepared as the initial state on every source node $\mathbf{v}_{O}\left(\mathbf{e}\left(j\right)\right)$ for $1\le j\le n$,  Protocol \ref{protocol2} makes the resultant state to be a maximally entangled state $q^{-1/2}\sum_{x\in\mathbb F_q}\ket{x}_{j-n}\ket{x}_{N+n+l+j}$  on
${\mathcal{H}}_{j-n}\otimes{\mathcal{H}}_{N+n+l+j}$ for any $j$ satisfying
$1\le j\le n$ if all the quantum channels are identity channels. 
\end{theorem}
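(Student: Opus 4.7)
The plan is to track the joint pure state through all four steps of Protocol~\ref{protocol2}, showing that Step~2 coherently realises the classical network evolution encoded by $M_0$, Step~3 records phases into the classical measurement outcomes, and Step~4 undoes those phases on the terminal output registers. It suffices to establish the maximally-entangled-state version that is stated explicitly in the theorem, since the corresponding quantum-state-transmission statement then follows by standard qudit teleportation using the shared pair and the free one-way public communication from source to terminal.

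The first stage is an induction on the substep index $j=1,\ldots,N+n$ of Step~2. I will show that just before the $(j{+}1)$-st substep the joint state equals
\[
\frac{1}{q^{n/2}}\sum_{\vec{x}\in\FF_q^{n}}\ \bigotimes_{i=1}^{n}\ket{x_i}_{i-n}\ket{x_i}_{i}\ \bigotimes_{n+l<k\leq n+l+j}\!\ket{Y_k(\vec{x},\vec{b})}_{k}\ \bigotimes_{n+l+j<k\leq N+2n+l}\!\ket{0}_{k},
\]
where $Y_k(\vec{x},\vec{b})=\sum_{k'=1}^{n+n'}m_0(k,k')X'_{k'}$ evaluated at $\vec{X}'=(\vec{x},\vec{b})$ is the classical edge content from \eqref{H204}. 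The induction step is driven by the definition of $U_{j'}(\theta)$: acting on $\ket{0}_{j'}\otimes\ket{\vec{y}}_{\qin(j')}$ it outputs $\ket{\sum_{k\in\qin(j')}\theta_{j',k}y_k}_{j'}\otimes\ket{\vec{y}}_{\qin(j')}$, and the subsequent $\sX_{j'}$ shift adds the shared-randomness contribution $\sum_{k\in\cin(j')}\theta_{j',k}Y_k$. Together these reproduce the recursion $Y_{j'}=\sum_{k<j'}\theta_{j',k}Y_k$ coherently on $\mathcal{H}_{j'}$, which matches the $j'$-th row of $M_0$.

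The second stage analyses the Fourier measurements of Step~3. By \eqref{eq:M(n+l+N+c k) for 1<c<n} combined with the multi-unicast assumption, $Y_{N+n+l+j}=x_j$, so after Stage~1 each terminal output register $\mathcal{H}_{N+n+l+j}$ already holds $\ket{x_j}$ and is not measured. The Fourier projection on each $\mathcal{H}_k$ with $k\in G'$ onto outcome $\beta_k$ rescales the amplitude by $\omega^{-\tr(\beta_k Y_k)}$. Using linearity of $Y_k$ in $(\vec{x},\vec{b})$ together with $Y_k=x_k$ for $k\in\{1,\ldots,n\}$, the accumulated phase splits as $\omega^{-\tr(\sum_{j=1}^n x_j\gamma_j)-\tr\Phi(\vec{b})}$ with $\gamma_j:=\sum_{k\in G'}\beta_k m_0(k,j)$ and an $\vec{x}$-independent global phase $\Phi(\vec{b})$.

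Finally, the Step~4 operator $\sZ_{N+n+l+j}(\gamma_j)$ multiplies $\ket{x_j}_{N+n+l+j}$ by $\omega^{\tr(x_j\gamma_j)}$, so the $\vec{x}$-dependent phases cancel term by term, and the residual state on $\bigotimes_{j=1}^{n}(\mathcal{H}_{j-n}\otimes\mathcal{H}_{N+n+l+j})$ is, up to the unit global phase $\omega^{-\tr\Phi(\vec{b})}$, the target $\bigotimes_{j=1}^{n}q^{-1/2}\sum_{x\in\FF_q}\ket{x}_{j-n}\ket{x}_{N+n+l+j}$. I expect the main obstacle to be the bookkeeping in Stage~1: one must verify that, after carefully identifying the indexing between shared-randomness vertices and their outgoing edges, the combined action of $U_{j'}(\theta)$ and the $\sX_{j'}$ shift reproduces the full $j'$-th row of $M_0$ in the coherent recursion. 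Once this is established, the phase cancellation between Stages~2 and~3 is routine, and the reduction to arbitrary quantum messages by teleportation closes the argument.
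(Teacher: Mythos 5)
Your proposal is correct and follows essentially the same route as the paper's proof: coherently simulate the classical linear code in Step~2 (your induction just makes explicit what the paper calls a ``straightforward calculation'' yielding the state $\ket{M_0(\vec a,\vec b)^T}_G$), observe that the Fourier measurements of Step~3 imprint the phase $\omega^{-\tr\vec\beta_{G'}^T M_0(\vec a,\vec b)^T}$, and cancel its $\vec a$-dependent part with the $\sZ$ corrections of Step~4 using \eqref{eq:M(n+l+N+c k) for 1<c<n}. The only cosmetic difference is that you work with the pure state conditioned on a fixed $\vec b$ (so the $\vec b$-dependent phase is global), whereas the paper carries the classical mixture over $\vec b$ through the density-matrix computation; the two are equivalent.
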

Remember that the transmission of quantum states is mathematically equivalent to 
sharing the maximally entangled state between the input and output systems.

\begin{proof}
We define the Hilbert spaces ${\mathcal{H}}_{I}$ and ${\mathcal{H}}_{O}$,
 as ${\mathcal{H}}_{I}:=\bigotimes_{j=1}^{n}{\mathcal{H}}_{j-n}$
and 
${\mathcal{H}}_{O}:=\bigotimes_{j=1}^{n}{\mathcal{H}}_{N+n+l+j}$ respectively.
Their bases $\{\bigotimes_{j=1}^n\ket{a_j}_{j-n}\}$ and $\{\bigotimes_{j=1}^n\ket{a_j}_{N+n+l+j}\}$ are abbreviated as 
$\{\ket{\vec a}_{I}\}$ and $\{\ket{\vec a}_{O}\}$. The sets $G$ and $G'$ are defined to be $\{1,\cdots ,n,n+l+1,\cdots,N+2n+l\}$ and 
$\{1,\cdots ,n,n+l+1,\cdots,N+n+l\}$.
   By straightforward calculation, we find that the density matrix on the network
after Step 2 is  
\begin{align*}
&\frac{1}{q^{n+n'}} \sum_{\vec{b}\in\mathbb{F}_{q}^{n'}}
\sum_{\vec{a},\vec{a}'\in\mathbb{F}_{q}^{n}}
\ket{\vec{a}}_{I}\bra{\vec{a}'}_{I}\otimes\ket{M_{0}\cdot
(\vec{a},\vec{b})^T
}_{G}\bra{M_{0}\cdot
(\vec{a}',\vec{b})^T
}_{G}\\
= & \frac{1}{q^{n+n'}}\sum_{\vec{b}\in\mathbb{F}_{q}^{n'}}
\sum_{\vec{a},\vec{a}'\in\mathbb{F}_{q}^{n}}
\ket{\vec{a}}_{I}\bra{\vec{a}'}_{I}\otimes
\ket{M_{0}\cdot
(\vec{a},\vec{b})^T
}_{G'}\bra{M_{0}\cdot
(\vec{a}',\vec{b})^T
}_{G'}
\otimes\ket{\vec{a}}_{O}\bra{\vec{a}'}_{O},
\end{align*}
if all the quantum channels are identity channels.
At the equality, we use the assumption that the classical protocol is a multiple-unicast network code.
The state after Step 3 can be expressed as
\[
\frac{1}{q^{n}}
\sum_{\vec{a},\vec{a}'\in\mathbb{F}_{q}^{n}}
\omega^{ \tr \vec{\beta}_{G'}^{T}\cdot M_{0}\cdot\left(
\vec{a}'-\vec{a},
\vec{0}_{n'}
\right)^T}\ket{\vec{a}}_{I}\bra{\vec{a}'}_{I}\otimes
\ket{\vec{a}}_{O}\bra{\vec{a}'}_{O},
\]
where $\vec{\beta}_{G'}:=(\beta_{1},\cdots,\beta_{n},\vec 0_l,\beta_{n+l+1},\cdots,\beta_{N+n+l},\vec 0_n)^T$.
 Finally, the state after Step 4 can be written as 
\[
\frac{1}{q^{n}}\sum_{\vec{a},\vec{a}'\in\mathbb{F}_{q}^{n}}
\ket{\vec{a}}_{I}\bra{\vec{a}'}_{I}\otimes
\ket{\vec{a}}_{O}\bra{\vec{a}'}_{O},
\]
which is the maximally entangled state to be constructed in this protocol. 
\end{proof}

\subsection{Security analysis}\Label{S6-2}
Next, we discuss the security of the transmitted quantum state under the following four assumptions.
First, the eavesdropper Eve can eavesdrop and modify the contents transmitted via all the channels  in $E_A$, which is a subset of $\tilde E$.
Second, she also knows the network structure, i.e., the topology of the network and all the coefficients $\{\theta_{j,k}\}_{j,k}$.
Third, Eve can get any information transmitted by the public channel.
Finally, Eve can't obtain any other information which may be correlated to the quantum messages.

In order to treat Eve's attack formally,
we introduce the map $\varsigma$ and the constant $h$ defined from $\mathbf{e}$ and $E_A$
as is done in the case of Eve's attack for the classical network coding.
Using this notation, we formulate the Eve's attack as follow.

{\bf Eve's attack:} 
Eve initially occupies  her initial Hilbert space $\mathcal{W}$ with a state $|\phi_{ini}\rangle$, 
where the dimension of the space $\mathcal W$ is chosen to be sufficiently large so that every Eve's operations can be treated as a unitary operation.
At the time $t=\varsigma\left(j\right)$,
Eve applies the unitary $W_{j}$ on 
$\cH_{\varsigma\left(j\right)} \otimes \cW$ for $1\leq j\leq h$.
Note that $W_{j}$ does not depend on the outcomes $\{\beta_k\}_k$ since 
the measurement step is done just after the transmission step.
However, Eve may finally get the measurement outcomes $\beta_k$ where $1\leq k\leq n$ or $n+l<k\leq N+n+l$ and $\forall j,\;
k\neq \iota(j)$, i.e. the measurement outcomes of the contents received from non-protected edge.
In the following security analysis, these classical information is denoted by a diagonalized density matrix on the space $\mathcal{V}$,
where the initial state of $\mathcal{V}$ is a pure state.

From this assumption, we also formulate the security of the quantum network coding against Eve's attack:
\begin{definition}
The quantum network code 
$\left\{ \theta_{j,k}\right\} _{j\in\left\{n+l+1,\cdots,|E|\right\},k \in \inc\left(j\right)}$
with the set of protected edges $E_{P}$ is called secure for Eve's attack 
$\{ V_j\}_j$ on the set of edges $E_{A}$
if the following condition holds.
When the initial state on the Hilbert space $\mathcal{H}_{I}\otimes \bigotimes_{j=1}^n \mathcal H_{j}$
is the maximally entangled state between $\mathcal{H}_{I}$ and $\bigotimes_{j=1}^n \mathcal H_{j}$
 i.e. the initial state is that  used in  Theorem \ref{T2-5-1},
the final state of the protocol on the subspace $\mathcal{H}_{I}\otimes\mathcal{W} \otimes\mathcal{V}$ is 
 a product state with respect to the partition between $\mathcal{H}_{I}$ and $\mathcal{W} \otimes\mathcal{V}$. 
\end{definition}
It is easily understood that: this defined condition   of the security 
is equivalent to the condition that
there is no leakage of the information  about the quantum messages by the quantum network code.
Note that, we call the state $\rho\in\mathfrak{B}\left({\mathcal{H}}'\otimes{\mathcal{H}}''\right)$
 a product state if there exist $\rho'\in\mathfrak{B}\left({\mathcal{H}}'\right)$
and $\rho''\in\mathfrak{B}\left({\mathcal{H}}''\right)$ such that $\rho=\rho'\otimes\rho''$.
Now, we can present the main result of this paper: 

\begin{theorem}
\Label{thm:The-quantum-network main}
The quantum network code 
$\left\{ \theta_{j,k}\right\} _{j\in\left\{n+l+1,\cdots,|E|\right\},k \in \inc\left(j\right)}$
with the set of protected edges $E_{P}$ is secure for all Eve's attacks on the set of edges $E_{A}$ 
if the following two conditions hold.
(i) The classical network code 
$\left\{ \theta_{j,k}\right\} _{j\in\left\{n+l+1,\cdots,|E|\right\},k \in \inc\left(j\right)}$
is secure for Eve's attacks on the set of edges $E_{A}$.
(ii) The messages are recoverable for Eve's attack on   $E_{A}$ 
by the set of protected edges $E_{P}$
in the sense of the classical network coding. 
\end{theorem}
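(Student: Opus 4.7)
The plan is to prove Theorem~\ref{thm:The-quantum-network main} by writing out the joint state of $\mathcal{H}_{I}\otimes\mathcal{W}\otimes\mathcal{V}$ at the end of Protocol~\ref{protocol2} and showing it factors across the cut $\mathcal{H}_{I}$ versus $\mathcal{W}\otimes\mathcal{V}$. I would dilate every operation to a unitary on a pure state: each Eve attack $W_{j}$ acts on $\mathcal{H}_{\varsigma(j)}\otimes\mathcal{W}$; the shared randomness $\vec{b}$ and the Step~3 one-time pads are purified as uniform variables held by honest parties and averaged at the very end; $\mathcal{V}$ collects only the publicly broadcast, non-padded outcomes $\beta_{j}$. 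By construction of $U_{j'}(\theta)$, the linearity \eqref{eq:y(i)=00003Dsum a-ij y(j)} propagates coherently, so after Step~2 the computational-basis amplitude carries $\ket{Y'_{j}}$ on edge $j$ with $Y'_{j}=\sum_{k}m'(j,k)X_{k}$ and $\vec{X}=(\vec{a},\vec{b},\vec{c})$, where $\vec{c}$ now sits in a register entangled with $\mathcal{W}$.

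Next I would carry out Steps~3 and~4. A Fourier outcome $\beta_{j}$ contributes the phase $\omega^{-\tr \beta_{j}Y'_{j}}$, while Step~4 puts back $\omega^{\tr \sum_{j=1}^{n}a_{j}\sum_{k\in G'}\beta_{k}m_{0}(k,j)}$, with $G':=\{1,\ldots,n,n+l+1,\ldots,N+n+l\}$. Because $m_{0}$ and $m$ coincide on the first $n+n'$ coordinates, the $\vec{a}$-phase cancels exactly on every non-attacked edge $k\in G'\setminus E_{A}$ --- this is the same cancellation that drives Theorem~\ref{T2-5-1} --- leaving the residual $\vec{a}$-phase $\omega^{\tr \sum_{k\in E_{A}}\beta_{k}\sum_{j=1}^{n}m_{0}(k,j)a_{j}}$ together with phases that depend only on $(\vec{b},\vec{c},\vec{\beta})$.

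The crux is then to decouple $\mathcal{H}_{I}$ from $\mathcal{W}\otimes\mathcal{V}$ using the two hypotheses. Condition (i), via Lemma~\ref{L88}, yields a function $\vec{\mathfrak b}(\vec{a})$ with $M_{\varsigma,1}\vec{a}=M_{\varsigma,2}\vec{\mathfrak b}(\vec{a})$; the change of variable $\vec{b}\mapsto \vec{b}-\vec{\mathfrak b}(\vec{a})$ removes $\vec{a}$ from Eve's wiretapped vector, so averaging over the uniform $\vec{b}$ makes the quantum register $\mathcal{W}$ statistically independent of $\vec{a}$ in the computational basis --- this is the \emph{primal} decoupling. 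Condition (ii) is used \emph{dually}: using the decoder $f_{\vec{b}}$ of Definition~\ref{DD5}, I would rewrite the residual attacked-edge phase above, modulo $\vec{a}$-free terms, as a phase over the Fourier outcomes $\{\beta_{j}\}_{j\in E_{P}}$; since those outcomes are sent by one-time pad in Step~3, averaging over the pad (which lives in the purified honest registers, not in $\mathcal{V}$) erases their contribution and kills the Fourier correlation between $\mathcal{H}_{I}$ and $\mathcal{W}\otimes\mathcal{V}$.

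The two decouplings together force the final reduced state to have the product form $\rho_{I}\otimes \rho_{WV}$. The principal obstacle I expect is the second step: showing that classical recoverability by $E_{P}$ is precisely what is needed to absorb the residual $\vec{a}$-phase into the one-time-padded outcomes. I anticipate needing an image identity dual to Lemma~\ref{L88} --- roughly, that the $\vec{a}$-columns of $M_{0}$ restricted to $E_{A}$ factor, modulo $\vec{a}$-free contributions, through those restricted to $E_{P}$ --- which embodies the primal/dual symmetry that explains why classical secrecy against $E_{A}$ together with classical recoverability through $E_{P}$ are jointly sufficient for quantum secrecy.
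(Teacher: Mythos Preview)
Your proposal has a concrete technical gap. The claim that ``$m_{0}$ and $m$ coincide on the first $n+n'$ coordinates'' is false: once Eve replaces the content of an attacked edge $\varsigma(i)$, every downstream edge $k$ loses its dependence on the original $Y_{\varsigma(i)}$, so $m(k,j)\neq m_{0}(k,j)$ for $1\le j\le n$ whenever $k$ lies downstream of some $\varsigma(i)$ (this is exactly the extra subtraction term in \eqref{eq:M ik=00003Dsum theta ij M jk sum theta delta}). Consequently the residual $\vec{a}$-phase after your cancellation argument is not supported on $E_{A}$ at all --- on an attacked edge the Fourier-measured register carries $|c_{i}\rangle$, which is $\vec{a}$-free --- but is smeared over all edges downstream of $E_{A}$. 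Your plan to then absorb this phase into the protected-edge outcomes via a ``dual image identity'' is therefore aimed at the wrong object. A second, smaller issue: carrying out Step~4 is a red herring, since the $\sZ$ corrections act only on $\mathcal{H}_{O}$, which is traced out for the security statement; the paper disposes of Step~4 by tracing out $\mathcal{H}_{O}\otimes\mathcal{V}'$ at the outset.

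The paper's route sidesteps both problems. After tracing out $\mathcal{H}_{O}\otimes\mathcal{V}'$, it observes that on the protected edges $H$ Eve never sees the Fourier outcomes, so one may replace the Fourier resolution there by the computational one; this turns the overlap on $H$ into the Kronecker delta $\delta\bigl(\vec{0}_{h'},\,M'_{\iota}(\vec{a}-\vec{a}',\vec{0}_{n'},\vec{c}-\vec{c}')^{T}\bigr)$. Recoverability (Definition~\ref{DD5}, applied with $\vec{b}=\vec{0}_{n'}$) then forces $\vec{a}=\vec{a}'$ directly, making the state diagonal in $\vec{a}$ on $\mathcal{H}_{I}$ --- no dual identity is needed. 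Only afterwards does the paper invoke secrecy, shifting $\vec{b}\mapsto\vec{b}+\vec{\mathfrak b}(-\vec{a})$ exactly as you propose, to strip the remaining $\vec{a}$-dependence from Eve's register $\mathcal{W}$. Your high-level picture --- secrecy for the bit part, recoverability for the phase part --- is morally right, but the mechanism that makes recoverability bite is the Fourier-to-computational basis change on $E_{P}$, not a phase rewriting over $E_{A}$.
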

From this theorem, we know that the security for the quantum messages is related not only to the secrecy of the classical information but also to the recoverability of the classical information.
Strictly speaking, this theorem guarantees that the security analysis of our quantum network coding 
is reduced to the analysis of the secrecy and the recoverability of the corresponding classical network coding.

\subsection{Security proof}\label{ProofTh}
We can prove Theorem 2 by checking Definition 2 directly as follows:

\emph{Proof of Theorem }\ref{thm:The-quantum-network main}: 

We consider the case that 
we initialize the state on the Hilbert space $\mathcal{H}_{I}\otimes \bigotimes_{j=1}^n \mathcal H_{j}$
to be  the maximally entangled state between $\mathcal{H}_{I}$ and $\bigotimes_{j=1}^n \mathcal H_{j}$, i.e. $\bigotimes_{j=1}^nq^{-1/2}\sum_{a\in\mathbb F_q}\ket{a}_{j-n}\ket{a}_{j}$,
and execute Protocol 1.

Given
 Eve's attack $\{V_j\}$ on the set of edges $E_A$, the total density matrix $\rho$
on the space $\mathcal H_I\otimes \mathcal H_O\otimes \mathcal V'\otimes \mathcal W\otimes \mathcal V$
 becomes
\begin{align}
&q^{-2n-n'-l}\sum_{\vec a,\vec a'\in\mathbb F_q^{n}}
\sum_{\vec b\in\mathbb F_q^{n'}}
\sum_{\vec c,\vec c'\in\mathbb F_q^{h}}
\sum_{\vec \beta\in\mathbb F_q^{N+2n+l}}
\ket{\vec a}_I\bra{\vec a'}_I
\nonumber\\
&
\otimes \langle\tilde {\vec  \beta}|_{G'}|M'(\vec a,\vec b,\vec c)^T\rangle_G 
\langle M'(\vec a',\vec b,\vec c')^T|_G|\tilde {\vec  \beta}\rangle_{G'}
\otimes \bigotimes_{j\in G'}\ket{\beta_j}_j^{\mathcal V'} \bra{\beta_j}_j^{\mathcal V'}
\nonumber\\
&
 \otimes  
(\prod_{j=1}^h\bra{ c_j}_{\varsigma (j)}V_j|M(\vec a,\vec b,\vec c)^T\rangle_{\varsigma (j)})\ket{\phi_{ini}}
\bra{\phi_{ini}}
(\prod_{j=1}^h\bra{ c_j'}_{\varsigma (j)}V_j|M(\vec a',\vec b,\vec c')^T\rangle_{\varsigma (j)})^\dagger
\nonumber\\
&
\otimes \bigotimes_{j\in G'\backslash H}\ket{\beta_j}_j^{\mathcal V}\bra{\beta_j}_j^{\mathcal V},
\end{align}
after Step 3,
where all the outcomes shared by  terminal nodes are denoted by a diagonal density matrix on the space $\mathcal V'$.
Note that the bases of $\mathcal V$ and $\mathcal V'$ are expressed by
 $\{\bigotimes _j\ket{\beta_j}_j^{\mathcal V}\}$
and 
 $\{\bigotimes _j\ket{\beta_j}_j^{\mathcal V'}\}$ respectively, and
we abbreviate the state 
$|\tilde \beta_{j_1}\rangle_{j_1}\otimes\cdots \otimes|\tilde \beta_{j_m}\rangle_{j_m}$ as 
$|\tilde {\vec\beta}\rangle_{(j_1,\cdots,j_m)}$  where $\vec \beta=(\beta_1,\beta_2,\cdots)$
as is the case of the computational base, and $H$ is defined to be the set  
$\{j|\mathbf e(j)\in E_P\}$. 
Since  all the operators in Step 4 of Protocol 1 are operators closed in the space $\mathcal H_O\otimes\mathcal V'$,
 it is sufficient  to check that the partial trace of $\rho$ with respect $\mathcal H_O\otimes\mathcal V'$, which is equal to   
\begin{align}
&q^{-2n-n'-l}\sum_{\vec a,\vec a'\in\mathbb F_q^{n}}
\sum_{\vec b\in\mathbb F_q^{n'}}
\sum_{\vec c,\vec c'\in\mathbb F_q^{h}}
\sum_{\vec \beta\in\mathbb F_q^{N+2n+l}}
\ket{\vec a}_I\bra{\vec a'}_I
\nonumber\\
&
\times {\rm Tr}(\langle\tilde {\vec  \beta}|_{G'}|M'(\vec a,\vec b,\vec c)^T\rangle_G 
\langle M'(\vec a',\vec b,\vec c')^T|_G|\tilde {\vec  \beta}\rangle_{G'})
\nonumber\\
&
 \otimes  
(\prod_{j=1}^h\bra{ c_j}_{\varsigma (j)}V_j|M(\vec a,\vec b,\vec c)^T\rangle_{\varsigma (j)})\ket{\phi_{ini}}
\bra{\phi_{ini}}
(\prod_{j=1}^h\bra{ c_j'}_{\varsigma (j)}V_j|M(\vec a',\vec b,\vec c')^T\rangle_{\varsigma (j)})^\dagger
\nonumber\\
&
\otimes \bigotimes_{j\in G'\backslash H}\ket{\beta_j}_j^{\mathcal V}\bra{\beta_j}_j^{\mathcal V},
\label{eq:g_1}
\end{align}
is a product state with respect to the partition between $\mathcal H_I$ and $\mathcal W\otimes\mathcal V $.
To simplify this expression, we use the following relation: for any density matrix $\rho_G\in \mathcal H_G$, any function $g$,  and any sets $D, D'$ which satisfies $ D,D'\subset G$,
\begin{align}
&
\sum_{\vec \beta\in\mathbb F_q^{N+2n+l}}
g(\{\beta_j\}_{j\in D'\cap D}) {\rm Tr}(\langle\tilde {\vec  \beta}|_{D'}\rho_G|\tilde {\vec  \beta}\rangle_{D'})
\nonumber\\
=&
\sum_{\vec \beta\in\mathbb F_q^{N+2n+l}}
q^{|G|-|D'|}g(\{\beta_j\}_{j\in D'\cap D})
\langle\tilde {\vec  \beta}|_{G}\rho_G|\tilde {\vec  \beta}\rangle_{G}
\nonumber\\
=&
\sum_{\vec \beta\in\mathbb F_q^{N+2n+l}}
q^{|G|-|D'|}g(\{\beta_j\}_{j\in D'\cap D})
\langle\tilde {\vec  \beta}|_{ D}\otimes
\langle\tilde {\vec  \beta}|_{G\backslash D}
\rho_G
|\tilde {\vec  \beta}\rangle_{ D}\otimes
|\tilde {\vec \beta}\rangle_{G\backslash D}
\nonumber\\
=&
\sum_{\vec \beta,\vec y\in\mathbb F_q^{N+2n+l}}
q^{|G|-|D'|-N-2n-l}
g(\{\beta_j\}_{j\in D'\cap D})
\langle\tilde {\vec  \beta}|_{ D}\otimes
\langle{\vec  y}|_{G\backslash D}
\rho_G
|\tilde {\vec  \beta}\rangle_{ D}\otimes
|{\vec y}\rangle_{G\backslash D}
\end{align}
holds. First and the last equality just come from the fact that
 both $\{|\tilde \beta\rangle_j \}_\beta$ and $\{\ket{y}_j \}_y$ are bases of the space $\mathcal H_j$.
 The second equality comes from  the property $|\tilde {\vec  \beta}\rangle_{ D}\otimes
|\tilde {\vec \beta}\rangle_{G\backslash D}
=|\tilde {\vec  \beta}\rangle_{ G}$ for any $\vec \beta \in \mathbb F_q^{N+2n+l}$ 
 which derived from  the definition directly. This relation can be used to modify the expression (\ref{eq:g_1})
by substituting $G'$, $G\backslash H$,   
$|M'(\vec a,\vec b,\vec c)^T\rangle_G 
\langle M'(\vec a',\vec b,\vec c')^T|_G$, and 
$\bigotimes_{j\in G'\backslash H\}_{j=1}^h}\ket{\beta_j}_j^{\mathcal V}\bra{\beta_j}_j^{\mathcal V}$
into $D'$, $D$, $\rho_G$, and $g(\{\beta_j\}_{j\in D'\cap D})$ respectively. As a result,  the expression (\ref{eq:g_1}) can be rewrite as
\begin{align}
&q^{-N-3n-n'-2l}\sum_{\vec a,\vec a'\in\mathbb F_q^{n}}
\sum_{\vec b\in\mathbb F_q^{n'}}
\sum_{\vec c,\vec c'\in\mathbb F_q^{h}}
\sum_{\vec \beta,\vec y\in\mathbb F_q^{N+2n+l}}
\ket{\vec a}_I\bra{\vec a'}_I
\nonumber\\
&
\times \langle\tilde {\vec  \beta}|_{G\backslash H}|M'(\vec a,\vec b,\vec c)^T\rangle_{G\backslash H} 
\langle M'(\vec a',\vec b,\vec c')^T|_{G\backslash H}|\tilde {\vec  \beta}\rangle_{G\backslash H}
\nonumber\\
&
\times \langle {\vec  y}|_{H}|M'(\vec a,\vec b,\vec c)^T\rangle_H 
\langle M'(\vec a',\vec b,\vec c')^T|_H| {\vec y}\rangle_{H}
\nonumber\\
&
 \otimes  
(\prod_{j=1}^h\bra{ c_j}_{\varsigma (j)}V_j|M(\vec a,\vec b,\vec c)^T\rangle_{\varsigma (j)})\ket{\phi_{ini}}
\bra{\phi_{ini}}
(\prod_{j=1}^h\bra{ c_j'}_{\varsigma (j)}V_j|M(\vec a',\vec b,\vec c')^T\rangle_{\varsigma (j)})^\dagger
\nonumber\\
&
\otimes \bigotimes_{j\in G'\backslash H}\ket{\beta_j}_j^{\mathcal V}\bra{\beta_j}_j^{\mathcal V}.
\label{eq:G_main_2}
\end{align}
A part of this expression can be evaluated by using the recoverability as follows:
For any $\vec a,\vec a'\in\mathbb F_q^{n}$, $\vec b\in\mathbb F_q^{n'}$, and  $\vec c,\vec c'\in\mathbb F_q^{h}$,
the relation
\begin{align}
&\sum_{\vec y\in\mathbb F_q^{N+2n+l}}
 \langle {\vec  y}|_{H}|M'(\vec a,\vec b,\vec c)^T\rangle_H 
\langle M'(\vec a',\vec b,\vec c')^T|_H| {\vec y}\rangle_{H}
\nonumber\\
=&
q^{N+2n+l-h'}
\langle {M'(\vec a',\vec b,\vec c')^T}|_{H}|M'(\vec a,\vec b,\vec c)^T\rangle_H 
\nonumber\\
=&
q^{N+2n+l-h'}
\langle {\vec  0_{N+2n+l}}|_{H}|M'(\vec a-\vec a',\vec 0_{n'},\vec c-\vec c')^T\rangle_H 
\nonumber\\
=&
q^{N+2n+l-h'}
\delta(\vec 0_{h'},M'_\iota(\vec a-\vec a',\vec 0_{n'},\vec c-\vec c')^T) 
\nonumber\\
=&
q^{N+2n+l-h'}
\delta(\vec a,\vec a')
\delta(\vec 0_{h'},M'_\iota(\vec 0_{n+n'},\vec c-\vec c')^T) 
\label{eq:G_sub_3}
\end{align}
holds where  $h'$ is defined as $|E_P|$ as is done in the case of classical network coding. The first relation justified from the fact that $q^{N+2n+l-|D|}$ is the number of vectors $\vec y\in \mathbb F_q^{N+2n+l}$ which gives an identical state 
by $| {\vec y}\rangle_{D}$ 
 for any $D\subset G$. The second relation comes from the fact that 
$\langle {\vec  y'}|_{D}|\vec y\rangle_D=\langle {\vec 0_{N+2n+l}}|_{D}|\vec  y-\vec y'\rangle_D$ holds for any  $\vec y,\vec y'\in \mathbb F_q^{N+2n+l}$ and $D\subset G$.
The third relation comes from the definition of $H$ and the abbreviation of the computational basis, where 
$M'_\iota$ is made from the $M_0$, $E_A$, and $E_P$ as is done in the case of the classical network coding in the previous section.
The last relation comes form   the recoverability.
That is, if  $M'_\iota(\vec a-\vec a',\vec 0_{n'},\vec c-\vec c')^T=\vec 0_{h'}=M'_\iota\vec 0_{n+n'+h}$, the relation  $\vec a-\vec a'=f_{\vec 0_{n'}}(M'_\iota(\vec a-\vec a',\vec 0_{n'},\vec c-\vec c')^T)=f_{\vec 0_{n'}}(M'_\iota\vec 0_{n+n'+h})=\vec 0_n$ must be hold where $f_{\vec 0_{n'}}$ is the function  defined in  Definition 2. Therefore, the expression (\ref{eq:G_main_2}) becomes
\begin{align}
&
q^{-N-3n-n'-l}
\sum_{\vec a\in\mathbb F_q^{n}}
\ket{\vec a}_I\bra{\vec a}_I
\sum_{\vec b\in\mathbb F_q^{n'}}
\sum_{\vec c,\vec c'\in\mathbb F_q^{h}}
\sum_{\vec \beta\in\mathbb F_q^{N+2n+l}}
\nonumber\\
&
\times 
\omega^{{\rm Tr} \vec \beta_{G\backslash H}^T M'(\vec 0_{n+n'},\vec c'-\vec c)}
\delta(\vec 0_{h'},M'_\iota(\vec 0_{n+n'},\vec c-\vec c')^T) 
\nonumber\\
&
\otimes (\prod_{j=1}^h\bra{ c_j}_{\varsigma (j)}V_j|M(\vec a,\vec b,\vec c)^T\rangle_{\varsigma (j)})\ket{\phi_{ini}}
\bra{\phi_{ini}}
(\prod_{j=1}^h\bra{ c_j'}_{\varsigma (j)}V_j|M(\vec a,\vec b,\vec c')^T\rangle_{\varsigma (j)})^\dagger
\nonumber\\
&
\otimes \bigotimes_{j\in G'\backslash H}\ket{\beta_j}_j^{\mathcal V}\bra{\beta_j}_j^{\mathcal V},
\label{eq:G_main_3}
\end{align}
where $\vec \beta_{G\backslash H}:=(\beta_1',\cdots \beta_{N+2n+l}')$ for  $\beta_j':=\beta_j$ if $j\in G\backslash H$, and  $\beta_j'=0$ if $j\notin G\backslash H$.
Here, in addition to the application of  the relation (\ref{eq:G_sub_3}), we have summed up with respect to $\vec a'$, and we have evaluated the inner product between the computational basis vectors and Fourier basis vectors.
In the next modification, the secrecy for the classical network coding is also used as follows:
Since the corresponding classical network coding is secure, we can define a function $\mathfrak b$ 
which satisfies  the relation (\ref{eq:g_secure_1}). Note that $M_\varsigma$ is uniquely defined from $M_0$ and $M_A$
as is defined in the case of the classical network coding.
Using this function, we can find the relation
\begin{align}
&\sum_{\vec b\in \mathbb F_q^{n'}}g(M_\varsigma(\vec a,\vec b,\vec c),M_\varsigma(\vec a,\vec b,\vec c'))
\nonumber\\
=&\sum_{\vec b\in \mathbb F_q^{n'}}g(M_\varsigma(\vec a,\vec{\mathfrak b}(-\vec a)+\vec b,\vec c),M_\varsigma(\vec a,\vec{\mathfrak b}(-\vec a)+\vec b,\vec c'))
\nonumber\\
=&\sum_{\vec b\in \mathbb F_q^{n'}}g(M_\varsigma(0,\vec b,\vec c),M_\varsigma(0,\vec b,\vec c'))
\end{align}
 for any function  $g$.
The first equality follows from the fact that the set $\mathbb F_q$ is a field, i.e. 
the set $\{x+b|b\in \mathbb F_q\}$ is equal to $\mathbb F_q$ for any $x\in \mathbb F_q$.
In the second equality, we just use the relation (\ref{eq:g_secure_1}).
This relation can be directly applied for the expression (\ref{eq:G_main_3}), i.e. 
$(\prod_{j=1}^h\bra{ c_j}_{\varsigma (j)}V_j|M(\vec a,\vec b,\vec c)^T\rangle_{\varsigma (j)})\ket{\phi_{ini}}
\bra{\phi_{ini}}
(\prod_{j=1}^h\left<\right. c_j'{\left.\right|}_{\varsigma (j)}V_j|M(\vec a,\vec b,\vec c')^T\rangle_{\varsigma (j)})^\dagger
$ is substituted into $g(M_\varsigma(\vec a,\vec b,\vec c),M_\varsigma(\vec a,\vec b,\vec c'))$.
As a result, the expression (\ref{eq:G_main_3}), i.e. the expression (\ref{eq:g_1}), can be evaluated as
\begin{align}
&(q^{-n}
\sum_{\vec a\in\mathbb F_q^{n}}
\ket{\vec a}_I\bra{\vec a}_I
)
\nonumber\\&
\otimes 
(q^{-N-2n-n'-l}
\sum_{\vec b\in\mathbb F_q^{n'}}
\sum_{\vec c,\vec c'\in\mathbb F_q^{h}}
\sum_{\vec \beta\in\mathbb F_q^{N+2n+l}}
\omega^{{\rm Tr} \vec \beta_{G\backslash H}^T M'(\vec 0_{n+n'},\vec c'-\vec c)}
\delta(\vec 0_{h'},M'_\iota(\vec 0_{n+n'},\vec c-\vec c')^T) 
\nonumber\\
&
 \times  
(\prod_{j=1}^h\bra{ c_j}_{\varsigma (j)}V_j|M(\vec 0,\vec b,\vec c)^T\rangle_{\varsigma (j)})\ket{\phi_{ini}}
\bra{\phi_{ini}}
(\prod_{j=1}^h\bra{ c_j'}_{\varsigma (j)}V_j|M(\vec 0,\vec b,\vec c')^T\rangle_{\varsigma (j)})^\dagger
\nonumber\\
&
\otimes \bigotimes_{j\in G'\backslash H}\ket{\beta_j}_j^{\mathcal V}\bra{\beta_j}_j^{\mathcal V}
).
\label{eq:G_main_4}
\end{align}
Here, we have use the fact that $|M(\vec a,\vec b,\vec c)^T\rangle_{\varsigma (j)}$ for any $1\leq j\leq h'$
can be thought as  a function of $M_\varsigma (\vec a,\vec b,\vec c)^T$.
This final expression of the density matrix on $\mathcal H_I\otimes \mathcal W\otimes\mathcal V $ trivially shows that 
the  density matrix is a product state with respect to the partition $\mathcal H_I$ and $\mathcal W\otimes\mathcal V $.

\hfill $\blacksquare$

\section{Examples}\Label{S7}
In this section, we present several examples of secure quantum network codes,
and show their security. 

\subsection{Butterfly network}
We apply Theorem \ref{thm:The-quantum-network main} to the secure network coding of the butterfly network given in our previous paper \cite{OKH}.
The numbers of the edges are assigned as in Fig. \ref{butterfly}.
Almost all the parameters in this case are written down in the following:  
$q$ is a prime power and at the same time it is relatively prime to $2$,
$N=7$, $n=2$, $n'=1$, $l=2$, $l_1=2$,
\begin{align}
\tilde V=&\{v_1,\cdots, v_6\},
\nonumber\\
V_S=&\{v_1,v_2\},
\nonumber\\
V_T=&\{v_5,v_6\},
\nonumber\\
E=&\{\mathbf e(1)=(i_1,v_1),\cdots, \mathbf e(13)=(v_5,o_2)\},
\nonumber\\
 \tilde E=&\{\mathbf e(5)=(v_1,v_3),\cdots, \mathbf e(11)=(v_4,v_6)\},
\nonumber\\
E_I=&\{\mathbf e(1)=(i_1,v_1),\mathbf e(2)=(i_2,v_2)\},
\nonumber\\
E_O=&\{\mathbf e(12)=(v_6,o_1),\mathbf e(13)=(v_5,o_2)\},
\nonumber\\
E_P=&\{\mathbf e(11),\mathbf e(12),\mathbf e(13)\},
\nonumber
\end{align}
\begin{align}
\begin{array}{rclrclrclrcl}
\iota(1)&=&11,
&\iota(2)&=&12,
&\iota(3)&=&13,
\end{array}
\nonumber
\end{align}
\begin{align}
\begin{array}{rclrclrcl}
\inc(5)&=&\{1,3\},
&
\inc(6)&=&\{2,4\},
&
\inc(7)&=&\{1,3\},
\\
\inc(8)&=&\{2,4\},
&
\inc(9)&=&\{5,6\},
&
\inc(10)&=&\{9\},
\\
\inc(11)&=&\{9\},
&
\inc(12)&=&\{8,11\},
&
\inc(13)&=&\{7,10\},
\end{array}
\nonumber
\end{align}
\begin{align}
\begin{array}{rclrclrclrcl}
\theta_{5,1}&=&2,
&
\theta_{5,3}&=&2,
&
\theta_{6,2}&=&2,
&
\theta_{6,4}&=&1,
\\
\theta_{7,1}&=&1,
&
\theta_{7,3}&=&1,
&
\theta_{8,2}&=&1,
&
\theta_{8,4}&=&1,
\\
\theta_{9,5}&=&1,
&
\theta_{9,6}&=&1,
&
\theta_{10,9}&=&1,
&
\theta_{11,9}&=&1,
\\
\theta_{12,8}&=&-1,
&
\theta_{12,11}&=&2^{-1},
&
\theta_{13,7}&=&-1,
&
\theta_{13,10}&=&2^{-1},
\end{array}
\nonumber
\end{align}
\begin{align}
M_0=&\left(\begin{array}{ccccccccccccc}
1&0&0&0&2&0&1&0&2&2&2&1&0\\
0&1&0&0&0&2&0&1&2&2&2&0&1\\
0&0&1&1&1&1&1&1&2&2&2&0&0
\end{array}\right)^T,
\nonumber
\end{align}
\begin{align}
\{v_5,v_6\}=&
\mathfrak E(\mathbf{e}(9))=
\mathfrak E(\mathbf{e}(10))=
\mathfrak E(\mathbf{e}(11)),
\nonumber
\\
\{v_5\}=&
\mathfrak E(\mathbf{e}(2))=
\mathfrak E(\mathbf{e}(6))=
\mathfrak E(\mathbf{e}(8))=
\mathfrak E(\mathbf{e}(13)),
\nonumber
\\
\{v_6\}=&
\mathfrak E(\mathbf{e}(1))=
\mathfrak E(\mathbf{e}(5))=
\mathfrak E(\mathbf{e}(7))=
\mathfrak E(\mathbf{e}(12)),
\nonumber
\\
\phi=&
\mathfrak E(\mathbf{e}(3))=
\mathfrak E(\mathbf{e}(4)).
\end{align}
The additional shared randomness expressed in  Fig. \ref{butterfly} is just used for holding back the measurement outcome $\beta_{11}$.

We assume that Eve attacks only one of edges $\{ \mathbf{e}(5),\cdots, \mathbf{e}(11)\}$. 
As an example, we suppose that $ \mathbf{e}(6)$ is the attacked edge, i.e. $E_A\{\mathbf{e}(6)\}$ and $\varsigma(1)=6$.
In this case, $M$, $M'$, $M_\varsigma$ and $M'_\iota$ can be evaluated as 
\begin{align}
M=&\left(\begin{array}{ccccccccccccc}
1&0&0&0&2&0&1&0&2&2&2&1          &0\\
0&1&0&0&0&2&0&1&0&0&0&        -1&0\\
0&0&1&1&1&1&1&1&1&1&1&2^{-1}-1&2^{-1}-1\\
0&0&0&0&0&0&0&0&1&1&1&2^{-1}   &2^{-1}   
\end{array}\right)^T,
\nonumber\\
M'=&\left(
\begin{array}{ccccccccccccc}
1&0&0&0&2&0&1&0&2&2&2&1          &0\\
0&1&0&0&0&0&0&1&0&0&0&        -1&0\\
0&0&1&1&1&0&1&1&1&1&1&2^{-1}-1&2^{-1}-1\\
0&0&0&0&0&1&0&0&1&1&1&2^{-1}   &2^{-1}   
\end{array}\right)^T,
\nonumber\\
M_\varsigma=&\left(\begin{array}{cccc}
0&2&1&0
\end{array}\right),
\nonumber\\
M'_\iota=&\left(\begin{array}{cccc}
2& 0&1&1\\
1&-1&2^{-1}-1&2^{-1}\\
0&0&2^{-1}-1&2^{-1}
\end{array}\right).
\nonumber\\
\end{align}
By choosing the function $\vec {\mathfrak b}$ as
\begin{align}
\vec {\mathfrak b}(\vec a):=& (2a_2),
\end{align}
 we can check that the condition (\ref{eq:g_secure_1}) holds, i.e. the corresponding classical network coding is secure against Eve's attack on the edge $\{\mathbf{e}(6)\}$. And,
by selecting the function $f_{\vec b}$ as
\begin{align}
f_{\vec b}(\vec y):=& (2^{-1}y_1-y_3-b_1,2^{-1}y_1-y_2-b_1),
\end{align}
we can also check the recoverability for Eve's attack on  $\{\mathbf{e}(6)\}$,
i.e. from the fact,
\begin{align*}
Y_{\iota(1)}=Y_{11}=&2A_1+B_1+C_1\\
Y_{\iota(2)}=Y_{12}=&A_1-A_2+(2^{-1}-1)B_1+2^{-1}C_1\\
Y_{\iota(3)}=Y_{13}=&(2^{-1}-1)B_1+2^{-1}C_1,
\end{align*}
we can check that
\begin{align*}
(A_1,A_2)=&f_{(B_1)}(Y_{11},Y_{12},Y_{13})
\\
=&(2^{-1}Y_{11}-Y_{13}-B_1,2^{-1}Y_{11}-Y_{12}-B_1).
\end{align*}
Therefore, Theorem \ref{thm:The-quantum-network main} guarantees the security of the quantum state
of Protocol \ref{protocol2} against the  attack by Eve on  $\{\mathbf{e}(6)\}$.
 
In fact, even in the case of Eve's attacks on any other edge,
 we can easily show the secrecy   in the classical setting as discussed in \cite{OKH}, and easily check  the recoverability.
Hence, Theorem \ref{thm:The-quantum-network main} guarantees the security of the quantum state
of Protocol \ref{protocol2} against Eve's attack  on any single edge in $\tilde E$.
Indeed, in this case, Protocol \ref{protocol2} is equal to the protocol given in \cite{OKH}.
Therefore, the application of Theorem \ref{thm:The-quantum-network main} 
can be regarded as another proof of the security analysis for the butterfly network given in our previous paper \cite{OKH}. 

\begin{figure}[tbh]
\begin{center}
\includegraphics[scale=0.6]{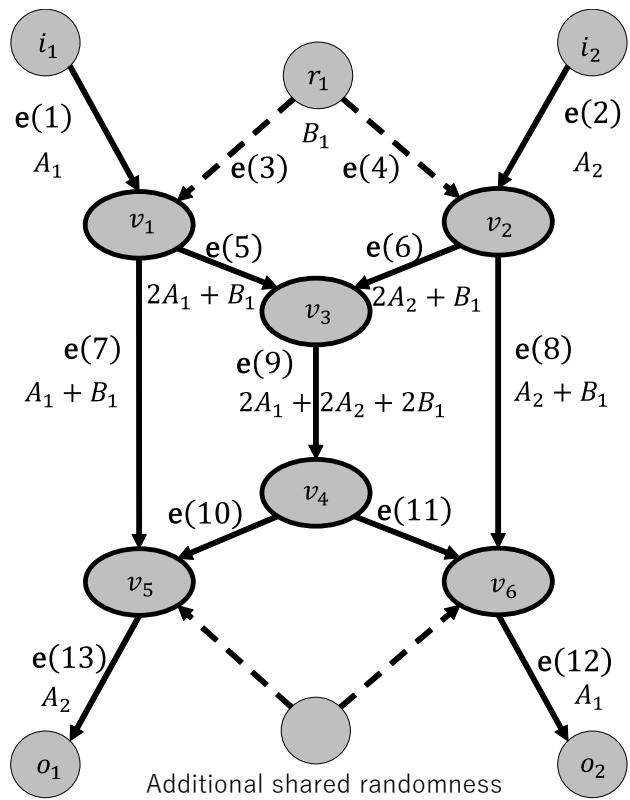}
\caption{Butterfly network with our code
}
\label{butterfly}
\end{center}
\end{figure}

\subsection{Example of networks with $n$-source nodes}\label{subsec one edges}

The  next example is depicted  in Fig. \ref{network1}. 
The  
graph $(\tilde{V},\tilde{E})$ is given as follows.
The set of nodes $\tilde{V}$ is composed of $v_{1}, \ldots, v_{n+2}$, and
the set of quantum channels $\tilde{E}$ is composed of 
$\mathbf{e}(2n+1), \cdots, \mathbf{e}(4n+1)$.   
The vertex $v_j$ is connected to the vertices  $v_{n+1}$ and  $v_{n+2}$
via the edges $\mathbf{e}(2n+j)$ and
 $\mathbf{e}(3n+j)$ respectively where $1\leq j\leq n$.
And, the vertex $v_{n+1}$ is connected to the vertex $v_{n+2}$ via
the edge $\mathbf{e}(4n+1)$.
The source nodes are given as $v_{1}, \ldots, v_{n}$,
and there is single  terminal node  $v_{n+2}$.
Each source node $v_j$ ($1\le j \le n$) 
intends to transmit a $q$-dimensional quantum message to the terminal node $v_{n+2}$,
where  $q$ is a prime power and at the same time it is relatively prime to $n$ and $n-1$.  
And, all the source nodes $v_1, \ldots, v_n$ share one random number $b_1$ of the field $\FF_q$.
Therefore,
 the $n$ input vertices $i_1, \cdots, i_n$ 
are connected to source nodes $v_1, \ldots, v_n$ via 
input edges $\mathbf{e}(1), \cdots, \mathbf{e}(n)$, respectively.
One shared-randomness vertex $r_1$ 
is connected to source nodes $v_1, \ldots, v_n$ via 
shared-randomness edges $\mathbf{e}(n+1), \cdots, \mathbf{e}(2n)$, respectively.
The terminal node $v_{n+2}$
is connected to 
$n$ output vertices $o_1,\cdots, o_n$ 
  via 
the output edges $\mathbf{e}(4n+2), \cdots, \mathbf{e}(5n+1)$, respectively.

The network code $\{ \theta _{j,k} \}_{j\in \{2n+1,\cdots, 5n+1 \}, k\in \inc(j)}$ is defined as follows:
\begin{align}
\begin{array}{rclrclrcl}
\theta_{2n+k,k}&=&n,
&
 \theta_{2n+k,n+k}&=&1,
 \\
   \theta_{3n+k,k}&=&1,
   &
   \theta_{3n+k,n+k}&=&1
 \\
\theta_{4n+1,2n+k}&=&n^{-1},
\\ 
\theta_{4n+k+1,3n+k}&=&1-(n-1)^{-1},
&
 \theta_{4n+k+1,3n+l}&=&-(n-1)^{-1},
 &
  \theta_{4n+k+1,4n+1}&=&(n-1)^{-1},
\end{array}
\Label{eq def network codes 11}
\end{align}
where $1 \le k \le n$, $1 \le l \le n$ and $k \neq l$.
\begin{figure}[tbh]
\begin{center}
\includegraphics[scale=0.38]{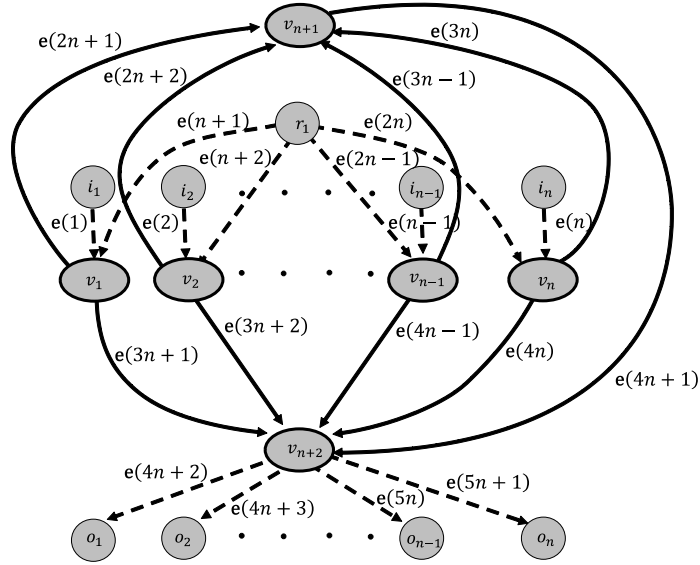}
\caption{The network of the first example, which consists of $n$ input vertices, one shared-randomness vertex,
$n$ output vertices, and $n+2$ nodes
}\Label{network1}
\end{center}
\end{figure}
The set of the protected edges $E_P$ consists of the $n+1$ edges  $\mathbf{e}(3n+1), \cdots, \mathbf{e}(4n+1)$
connecting to the terminal node $v_{n+2}$.
Since all the protected edges connected to the unique terminal node $v_{n+2}$, it is not necessary to send the
 measurement outcomes of the states received from  the channel   $\mathbf{e}(3n+1), \cdots, \mathbf{e}(4n+1)$.
Therefore, we need  not    consume any additional secret randomness in order to hold back the measurements outcomes.

We can easily construct  the $(5n+1)\times (n+1)$ matrix $M_0$ made from $\{\theta_{j,k}\}$, i.e.
\begin{align}
m_0(j,k)=&
\left\{
\begin{array}{cl}
\delta(j,k) & \makebox{if $1\leq j\leq n$ and $1\leq k\leq n+1$} 
\\
\delta(n,k) & \makebox{if $n< j\leq 2n$ and $1\leq k\leq n+1$} 
\\
\delta(j-2n,k)n+\delta(n+1,k) & \makebox{if $2n< j\leq 3n$ and $1\leq k\leq n+1$} 
\\
\delta(j-3n,k)+\delta(n+1,k) & \makebox{if $3n< j\leq 4n$ and $1\leq k\leq n+1$} 
\\
1 & \makebox{if $j=4n+1$ and $1\leq k\leq n+1$} 
\\
\delta(j-4n-1,k) & \makebox{if $4n+1< j\leq 5n+1$ and $1\leq k\leq n+1$},
\end{array}
\right.
\end{align}
and we can check that the condition (\ref{eq:M(n+l+N+c k) for 1<c<n}) satisfies;
that is, we can successfully send $n$ messages  parallelly by the corresponding classical  network code. 
From Theorem \ref{T2-5-1}, that fact  guarantees that the corresponding quantum network code given in Protocol \ref{protocol2} transmits the desired quantum states correctly if there is no attack.

Now, we assume that Eve attacks only one of the edges $\mathbf{e}(2n+1), \cdots, \mathbf{e}(4n+1)$, i.e.  $E_A=\{{\mathbf{e}(j_0)}\}$ for a certain $j_0$ which satisfies $2n+1 \le j_0 \le 4n+1$.
From Theorem \ref{thm:The-quantum-network main}, we know that it is enough to 
 check the secrecy and recoverability of the corresponding classical network codes in order to 
guarantee
the security of the transmitted quantum states, 

From the definition, the $1\times (n+2)$ matrix $M_{\varsigma}$ 
is equal to $(m_0(j_0,1),\cdots,m_0(j_0,n+1),0)$.
Since the matrix $M_{\varsigma}$ have a single raw and the $n+1$-th column of the matrix is non-zero,
 we can construct the function $\vec {\mathfrak b}$ which satisfies the relation (\ref{eq:M(n+l+N+c k) for 1<c<n}), i.e. 
the corresponding  classical network code is 
  secure against Eve's attack on the edge $\{\mathbf{e}(j_0)\}$.

The recoverability of the corresponding classical network code is shown as follows.
When $E_A=\{{\mathbf{e}(3n+k)}\}$ with $1\le k \le n$, 
\begin{equation}
M_{\iota}\left( 
\vec{a},
0,
c_1
\right)^T
=( a_1,\cdots,a_{k-1}, c_1, a_{k+1}\cdots,a_n, \sum_{i=1}^n a_i ).
\end{equation}
When $E_A=\{{\mathbf{e}(2n+k)}\}$ with $1\le k \le n$, 
\begin{equation}
M_{\iota}\left( 
\vec{a},
0 ,
c_1
\right)^T
=( a_1,\cdots,a_n, \sum_{i=1}^n a_i -a_k +c_1).
\end{equation}
When $E_A=\{{\mathbf{e}(4n+1)}\}$, 
\begin{equation}
M_{\iota}\left( 
\vec{a},
0 ,
c_1
\right)^T
=( a_1,\cdots,a_n, c_1).
\end{equation}
In every case, it is easy to check that there exists a function $f_0$ 
satisfying (\ref{eq:sec classical subsec robustness f a M i a e =00003D i}). The existence is  equivalent to 
meeting the second condition in  Lemma \ref{thm:classical robust main} which is  given in Appendix \ref{AL} holds. Therefore,
 the classical network code is recoverable for any  Eve's attacks on any single communication channel in $\tilde{E}$. 

Therefore,  Theorem \ref{thm:The-quantum-network main} indicates that  the quantum network coding given by Fig. \ref{network1} with  (\ref{eq def network codes 11}) is secure for any Eve's attacks on any single 
quantum  channel in $\tilde{E}$.

\subsection{Network that is secure against all attacks on any two edges}\label{subsec two edges}

The network of the next example is shown in Fig. \ref{network2}. 
The corresponding  
graph $(\tilde{V},\tilde{E})$ is formally given as follows.
The set of nodes $\tilde{V}$ is composed of $v_{1}, \ldots, v_{5}$, and
the set of quantum channels $\tilde{E}$ is composed of 
$\mathbf{e}(7), \cdots, \mathbf{e}(14)$.   
$v_1$ is connected to $v_3$, $v_4$, and $v_5$ via 
$\mathbf{e}(7)$,
$\mathbf{e}(9)$, and
$\mathbf{e}(11)$ respectively.
$v_2$ is also connected to $v_3$, $v_4$, and $v_5$ via 
$\mathbf{e}(8)$,
$\mathbf{e}(10)$, and
$\mathbf{e}(12)$.
And,
$v_5$ is additionally connected from $v_3$ and $v_4$ via  
$\mathbf{e}(13)$ and
$\mathbf{e}(14)$.

The source nodes are given as $v_{1}, v_{2}$
and the terminal node is given as $v_{5}$.
Source nodes $v_1$ and $v_2$ 
intend to transmit a $q$-dimensional quantum message to terminal node $v_{5}$,
where we assume that $q$ is relatively prime to $2$, $3$, and $5$.
In this network, all source nodes $v_1, v_2$ share two random numbers $b_1$ $b_2$ of the finite field $\FF_q$.
As a result, the two input vertices $i_1, i_2$ 
are connected to source nodes $v_1, v_2$ via 
input edges $\mathbf{e}(1),  \mathbf{e}(2)$, respectively.
A shared-randomness vertex $r_1$ ($r_2$) 
is connected to source nodes $v_1$, $v_2$ via 
the shared-randomness edges $\mathbf{e}(3),\mathbf{e}(4)$ ($\mathbf{e}(5),\mathbf{e}(6)$), respectively.
The two output vertices $o_1, o_2$ 
are connected from terminal node $v_{5}$ via 
output edges $\mathbf{e}(15),  \mathbf{e}(16)$, respectively.

Then, the network code is defined by the following parameters:
\begin{align}
\begin{array}{rclrclrclrcl}
\theta_{7,1}&=&1,&
\theta_{7,3}&=&1,&
\theta_{7,5}&=&0,\\
\theta_{9,1}&=&1,&
\theta_{9,3}&=&1,&
\theta_{9,5}&=&1,\\
\theta_{11,1}&=&1,&
\theta_{11,3}&=&0,&
\theta_{11,5}&=&1,\\
\theta_{8,2}&=&1,&
\theta_{8,4}&=&2,&
\theta_{8,6}&=&1,\\
\theta_{10,2}&=&2,&
\theta_{10,4}&=&1,&
\theta_{10,6}&=&2,\\
\theta_{12,2}&=&1,&
\theta_{12,4}&=&1,&
\theta_{12,6}&=&3,\\
\theta_{13,7}&=&1,&
\theta_{13,8}&=&1,\\
\theta_{14,9}&=&1,&
\theta_{14,10}&=&1,\\
\theta_{15,11}&=&3\times 4^{-1},&
\theta_{15,12}&=&- 2^{-1},&
\theta_{15,13}&=&0,&
\theta_{15,14}&=&4^{-1},\\
\theta_{16,11}&=&-5\times 8^{-1},&
\theta_{16,12}&=&-3\times 4^{-1},&
\theta_{16,13}&=&-2^{-1},&
\theta_{16,14}&=&9\times 8^{-1},
\label{eq def network codes 111}
\end{array}
\end{align}
\begin{figure}[tbh]
\begin{center}
\includegraphics[scale=0.40]{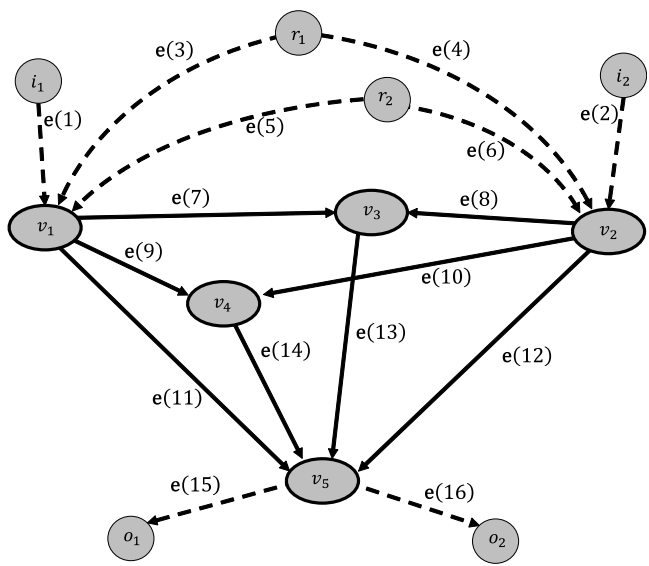}
\caption{The network of the first example, which consists of $n$ input vertices, one shared-randomness vertices,
$n$ output vertices, and $n+2$ nodes
}\label{network2}
\end{center}
\end{figure}
The set of the protected edges $E_P$ consists of  the four edges $\mathbf{e}(11), \mathbf{e}(12),\mathbf{e}(13), \mathbf{e}(14)$ connecting to terminal node $v_{5}$.
Since this network has the single terminal node $v_{5}$, it is not necessary to send 
all the measurement outcomes from edges $\mathbf{e}(11), \mathbf{e}(12),\mathbf{e}(13), \mathbf{e}(14)$.
Thus, we need not consume any additional secret randomness to hold back the measurement outcomes.

By straightforward calculations, we can check that the network code satisfies condition (\ref{eq:M(n+l+N+c k) for 1<c<n}); therefore, we can successfully send $2$ characters parallelly with the corresponding classical network code. 
That is, Theorem \ref{T2-5-1} guarantees that the corresponding quantum network code given in Protocol \ref{protocol2}
transmits the desired quantum messages correctly if there is no attack on all the edges.

Now, we assume that Eve attacks any two of edges in the set $\tilde E$; $E_A=\{ \mathbf{e}(j_0),\mathbf{e}(k_0) \}$ for $7 \le j_0< k_0 \le 14$.
From Theorem \ref{thm:The-quantum-network main},
we can guarantee
the security of the transmitted quantum message by 
 checking the secrecy and recoverability of the corresponding classical network codes. 

This network coding satisfies $n'=h=2$.
We can directly calculate $M_{\varsigma,2}$ 
 and verify that $M_{\varsigma,2}$ is an invertible matrix for any choice of $E_A=\{ \mathbf{e}(j_0),\mathbf{e}(k_0) \}$ with $7 \le j_0< k_0 \le 14$. For example, in the case of $j_0=8$, and $k_0=13$,
we can evaluate  $M_{\varsigma,2}$ as $\left(
 \begin{array}{cc}
2&1\\
1&0
\end{array}
 \right)$. 
Thus, Corollary \ref{cor security classical code 1} guarantees the secrecy
of this classical network code against Eve's attack. 

We next focus on the recoverability of the corresponding classical network code. 
From the second condition in  Lemma \ref{thm:classical robust main}
proved in the Appendix \ref{AL}, we only need to consider the case where all random variables are fixed to $0$. In this case the information on the edges on $\mathbf{e}(11), \mathbf{e}(12), \mathbf{e}(13)$, and $\mathbf{e}(14)$ can be written as $A_1$, $A_2$, $A_1+A_2$, and $A_1+2A_2$, respectively, where $A_1$ and $A_2$ are the information sent from $I_1$ and $I_2$, respectively if there are no disturbances. 
Hence, we can recover $A_1$ and $A_2$ from any two of the edges. 
Now, from the topology of the graph, Eve's attack on $E_A=\{\mathbf{e}(j_0),\mathbf{e}(k_0)\}$ affects at most two of these edges. 
Therefore, the protected edges $E_P$ including the above edges are recoverable.

Finally, from Theorem \ref{thm:The-quantum-network main}, the quantum network coding given by Fig. \ref{network2} with  Eq.(\ref{eq def network codes 111}) is secure for all Eve's attack on the any two of
quantum  channels in $\tilde{E}$.

\subsection{Quantum threshold ramp secret sharing}
Quantum secret sharing (QSS) \cite{CGL} is a protocol to encrypt a quantum state into a multipartite state
so that each system (share) has no information and 
an original state can be reproduced from a collection of the systems. 
Various different QSS schemes have been developed \cite{CGL,G00,OSIY2005,DLLZZ,MS08,YTCW}.
Among them, a $(k,L,n)$-threshold ramp QSS scheme is defined as
a QSS scheme with $n$ shares having the following property \cite{OSIY2005}: The original state can be reconstructed from 
any $k$ shares, and  any $k-L$ shares has no information. 
Hence, partial information of the original state can be drived from $t$ shares with $k > t >k-L$.  
The network codes given in the above subsections $B$ and $C$ are strongly related to
$(k,L,n)$-threshold ramp QSS scheme with $k=n$. 
Here, the condition $k=n$ means that all the $n$ shares are required to reconstruct the original state. 

\begin{figure}[tbh]
\begin{center}
\includegraphics[scale=0.38]{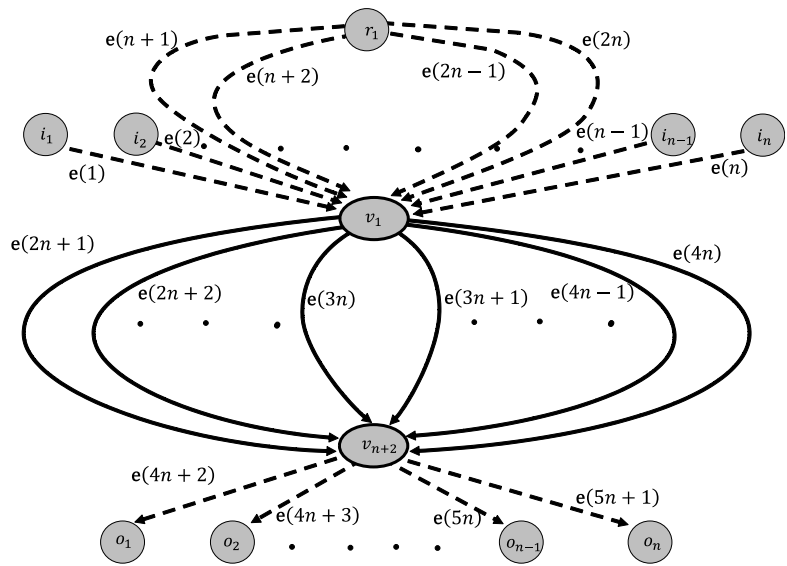}
\caption{The network that can be derived from the second example by contracting the edge $\mathbf{e} (4n+1)$
and merging vertices $\mathbf{v}_i$ with $1\le i \le n$ in graph theoretical sense \label{network secret sharing 1}}
\end{center}
\end{figure}

The network code given in the subsection $B$ is related to
a $(2n,2n-1,2n)$-threshold ramp QSS scheme. Let us consider a new network in Fig. \ref{network secret sharing 1} which can be derived from the network in Fig. \ref{network1} by the following modification of the graph. 
$n-1$ vertices from $v_{2}$ to $v_{n}$ are also merged into the vertex $v_1$, i,e, a set of vertices $\{v_i \}_{1\le i \le n}$ are replaced by a single vertex  $v_{1}$. 
The vertex 
$v_{n+1}$ is also merged into the vertex $v_{n+2}$.  As a result,
the edge $\mathbf{e}(4n+1)$  disappears. 
All the edges connected to an old replaced vertex are connected to the corresponding new 
vertex, and all the edges connected from an old replaced vertex are 
connected from the corresponding new vertex.  
Following this modification, the network code is also modified as follows: 
\begin{align}
\begin{array}{rclrclrcl}
\theta_{2n+k,k}&=&n,
&
 \theta_{2n+k,n+k}&=&1,
 \\
   \theta_{3n+k,k}&=&1,
   &
   \theta_{3n+k,n+k}&=&1
\\ 
\theta_{4n+k+1,3n+k}&=&1-(n-1)^{-1},
&
 \theta_{4n+k+1,3n+l}&=&-(n-1)^{-1},
 &
  \theta_{4n+k+1,2n+l}&=&n^{-1}(n-1)^{-1},
\end{array}
\end{align}
where $1 \le k \le n$, $1 \le l \le n$ and $k \neq l$. Note that, the indexing of the vertices $v_j$ and edges $\mathbf e(j)$ breaks  the general description rule defined in the previous section in order to make it easy to compare  this example and that in the subsection B.
From the security analysis of the subsection B, 
this network code, which does not have any intermediate nodes, is apparently secure against Eve's attack on any one of the $2n$ channels.
On the other hand, all the information on $2n$ channels are required to recover the original quantum state.
Further, since the classical randomness is used only in $v_1$, 
the classical randomness can be generated on  the node $v_1$. 
Hence, as a protocol sending $n$-quantum messages from the input node $v_1$ to the output node $v_{n+2}$,
this network coding is nothing but $(2n,2n-1,2n)$ quantum threshold ramp secret sharing scheme \cite{OSIY2005}.   

\begin{figure}[tbh]
\begin{center}
\includegraphics[scale=0.38]{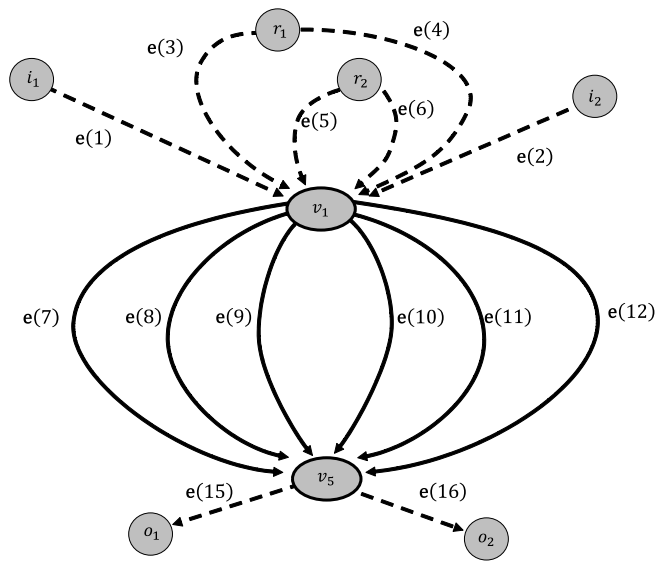}
\caption{The network that can be derived from the third example by contracting the edge $\mathbf{e} (4n+1)$
and merging vertices $\mathbf{v}_i$ with $1\le i \le n$ in graph theoretical sense \label{network secret sharing 2}}
\end{center}
\end{figure}

The network code given in the subsection $C$ is also related to a $(6,4,6)$ quantum ramp secret sharing scheme. 
Let us consider a new network in Fig. \ref{network secret sharing 2} which can be derived from the network in Fig. \ref{network2} 
by the following modification of the graph operations. 
The vertex $v_{2}$ is merged into the vertex $v_{1}$.
The vertices $v_{3}$ and $v_{4}$ are also merged into the vertex $v_5$.
  As a result, the edges $\mathbf{e}(13)$ and  $\mathbf{e}(14)$
  disappears. 
All the edges connected to an old replaced vertex are connected to the corresponding new 
vertex, and all the edges connected from an old replaced vertex are 
connected from the corresponding new vertex.  
Following this modification, the network code is also modified as follows: 
\begin{align}
\begin{array}{rclrclrclrcl}
\theta_{7,1}&=&1,&
\theta_{7,3}&=&1,&
\theta_{7,5}&=&0,\\
\theta_{9,1}&=&1,&
\theta_{9,3}&=&1,&
\theta_{9,5}&=&1,\\
\theta_{11,1}&=&1,&
\theta_{11,3}&=&0,&
\theta_{11,5}&=&1,\\
\theta_{8,2}&=&1,&
\theta_{8,4}&=&2,&
\theta_{8,6}&=&1,\\
\theta_{10,2}&=&2,&
\theta_{10,4}&=&1,&
\theta_{10,6}&=&2,\\
\theta_{12,2}&=&1,&
\theta_{12,4}&=&1,&
\theta_{12,6}&=&3,\\
\theta_{15, 9}&=&4^{-1}&
\theta_{15, 10}&=&4^{-1},&
\theta_{15,11}&=&3\times 4^{-1},&
\theta_{15,12}&=&- 2^{-1},\\
\theta_{16,7}&=&-2^{-1},&
\theta_{16,8}&=&-2^{-1},&
\theta_{16,9}&=&9\times 8^{-1},&
\theta_{16,10}&=&9\times 8^{-1},\\
\theta_{16,11}&=&-5\times 8^{-1},&
\theta_{16,12}&=&-3\times 4^{-1},
\end{array}
\end{align}
From the security analysis of the subsection C, 
 this new network code, which does not have any intermediate nodes, is apparently secure against Eve's attack on any two of the $6$ channels.
On the other hand, all the information on $6$ channels are required to recover the original quantum state. 
Further, since the classical randomness is used only in $v_1$, 
the classical randomness can be generated on  the node $v_1$. 
Hence, as a protocol sending a quantum message from the input node $v_1$ to the output node $v_{5}$,
this network coding is nothing but $(6,4,6)$ quantum threshold ramp secret sharing scheme \cite{OSIY2005}.   

\section{
Advantages of our quantum network code against
 quantum error correcting code on partially corrupted quantum network
}
In this paper, we give a way to make protocols of secure transfer of quantum messages on quantum networks designed originated from classical network coding. 
However, it has been already investigated to construct such a protocol designed originated from quantum error correcting code, i.e.  quantum error correcting code on partially corrupted quantum network~\cite{SH18-1,SH18-2,HS19}.
Therefore, we think that it is fair to compare the secure quantum network coding given in this paper and the quantum error correcting code on partially corrupted quantum network.

As a special property of quantum information, it is well known that, if quantum messages can be transferred with fidelity $1$, it is guaranteed that any other party can't get 
any information about the quantum messages. 
Therefore, it is natural to apply this property to construct protocols of secure transfer of quantum messages on  quantum network which is made from the following three processes. 
1) By using a quantum error correcting code, a quantum message is encoded into several quantum characters at the  source nodes.
2) The quantum characters are sent to terminal nodes via a quantum network.
3) At the terminal nodes, 
the transmitted quantum characters are decoded into the original quantum message.
If the amount of disturbances by Eve is bounded by a threshold given by the error correcting code, the secrecy  and reliability of the transfer of the message are simultaneously guaranteed.
Such an idea has been discussed by several papers~\cite{SH18-1,SH18-2,HS19}. However, 
our construction of the quantum network coding has two advantages against these previous works.

First advantage is a wide applicability. 
Even in the previous papers~\cite{SH18-1,SH18-2,HS19}, operations in the intermediate nodes are designed originated from classical network coding automatically.
However,  all the operations on the intermediate node are restricted to be unitary operations.  
For example, all the node operations are  quantum unitary gates  designed originated from arbitrary bijective
linear maps~\cite{SH18-2}.
As a result, only the bijective functions can be used to design the quantum operators. Strictly speaking, only the invertible  functions can be used.
From this restriction, 
we can't construct a quantum network protocol by  simple application of quantum error correcting code even on the butterfly network for example. 
 Therefore,
very restricted types of quantum network protocols can be constructed from the previous papers especially in the sense of the variety on the intermediate nodes. 
In the case of quantum network coding in this paper,
the operations in the intermediate nodes are CPTP map generally, i.e. unitary operations and measurement operations.
As a result, 
we can design the node operations originated even from irreversible linear maps. Note that such a property 
  is inherited from the previous result regarding  the construction of quantum network coding designed originated from classical network coding without secrecy~\cite{Kobayashi2010} which is a basis of our result.

Second advantage is an improvement of the secrecy.
As we mentioned, in the quantum network protocol made from quantum error correcting code, the secrecy and reliability is indistinguishable.
As a result, the secrecy of the code is deeply connected to theoretical limits of quantum error correction.
However, in the quantum network coding proposed here, 
even if the terminal node can't recover the original quantum message,
it is possible that the two conditions in Theorem 2  hold with
respect to the set $E_P$ of the protected edges.
In this case, the secrecy of the quantum message is guaranteed\footnote{The reference \cite{ICITS} showed that this condition is equivalent to the recoverability of the original quantum message by collecting the
information from all the protected edges.}.
Therefore, the secrecy is not necessarily restricted by theoretical limits of error correcting code.

\section{Conclusion}\Label{S5}
Based on a secure classical network code, we have proposed a canonical way to make a secure quantum network
code in the multiple-unicast setting. This protocol certainly transmits quantum states when there is no attack.
While our protocol needs classical communications, they are limited to one-way communications
i.e.,
 all of the classical information is given by predefined measurements on nodes and only the final operators on the terminal nodes are affected by the information.
Hence, it does not require verification process, which ensures single-shot security. We have also shown the secrecy of the quantum network code under the secrecy and the recoverability of the
corresponding classical network code.
Our security proof focuses on the classical recoevrbility and the classical secrecy \cite{Renes}.

Our protocol offers secrecy different from that of QKD.
While our protocol has the restriction of the number of attacked edges, our protocol does not require repetitive quantum communications
because it does not need a verification process. In contrast, QKD needs repeatative quantum communications,
which enables us to verify the non-existence of the eavesdropper and to ensure the security. Finally, although the previous result \cite{OKH} can be applied only to a special secure code on the butterfly network,
our secure network code can be applied to any secure classical network code.
We have demonstrated several application of our code construction in various network including the butterfly network.
These applications show applicability of our method.

\section*{Acknowledgments}
The authors are very grateful to 
Professor Ning Cai and Professor Vincent Y. F. Tan
for helpful discussions and comments.
The works reported here were supported in part by 
the JSPS Grant-in-Aid for Scientific Research 
(C) No. 16K00014, (B) No. 16KT0017,  
(C) No. 17K05591, (A) No. 23246071, 
the Okawa Research Grant
and Kayamori Foundation of Informational Science Advancement.

\appendices

\section{Lemmas for classical network code}\Label{AL}
We give a corollary and  a lemma for classical network coding
that are used for the analysis on our examples given in the section \ref{S7}.

\subsection{Corollary for secrecy}

We can obtain the following corollary of Lemma \ref{L88}, which is useful for actual analysis.
\begin{corollary}\Label{cor security classical code 1}
When $h=n'$, a (classical) network code is secure for all of Eve's attacks on
$E_{A}$, if $M_{\varsigma,2}$ is invertible. 
In particular, when  $h=n'=1$, the (classical) network code is secure for all Eve's attack on
$E_{A}$, if $M_{\varsigma,2} \neq 0$.
\end{corollary}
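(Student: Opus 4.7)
The plan is to derive this corollary directly from Lemma \ref{L88}. Recall that Lemma \ref{L88} characterizes secrecy as the condition that the image of $M_{\varsigma,1}$ is contained in the image of $M_{\varsigma,2}$, equivalently: for every $\vec a\in\FF_q^n$ there exists some $\vec{\mathfrak b}(\vec a)\in\FF_q^{n'}$ with $M_{\varsigma,1}\vec a = M_{\varsigma,2}\vec{\mathfrak b}(\vec a)$. So the whole task is to exhibit such a function $\vec{\mathfrak b}$ under the hypothesis $h=n'$ and $M_{\varsigma,2}$ invertible.

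First, I would observe that under the hypothesis $h=n'$, the matrix $M_{\varsigma,2}$ is square (of size $h\times n' = h\times h$), so ``invertible'' makes sense in the usual sense and in particular $M_{\varsigma,2}$ has an inverse $M_{\varsigma,2}^{-1}$ over $\FF_q$. Then I would simply define
\begin{equation*}
\vec{\mathfrak b}(\vec a) := M_{\varsigma,2}^{-1} M_{\varsigma,1}\vec a,
\end{equation*}
and verify $M_{\varsigma,2}\vec{\mathfrak b}(\vec a) = M_{\varsigma,2}M_{\varsigma,2}^{-1}M_{\varsigma,1}\vec a = M_{\varsigma,1}\vec a$, which is exactly the required identity \eqref{eq:g_secure_1}. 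Invoking Lemma \ref{L88}, secrecy against Eve's attack on $E_A$ follows.

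For the ``in particular'' statement, I would just specialize to $h=n'=1$: then $M_{\varsigma,2}$ is a $1\times 1$ matrix, i.e.\ a single element of $\FF_q$, and invertibility is equivalent to being nonzero. So the second assertion is an immediate instance of the first, completing the proof.

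There is no real obstacle here; the corollary is essentially a restatement of Lemma \ref{L88} in the convenient case when $M_{\varsigma,2}$ is square and full rank, so that the containment of images becomes automatic (the image of $M_{\varsigma,2}$ is all of $\FF_q^h$). The only thing worth stating explicitly in the write-up is that the function $\vec{\mathfrak b}$ can be taken to be linear, which is why the one-line formula above suffices.
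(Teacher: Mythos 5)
Your proposal is correct and follows essentially the same route as the paper: both arguments reduce to Lemma \ref{L88} via the observation that an invertible square $M_{\varsigma,2}$ is surjective, so the image of $M_{\varsigma,1}$ is automatically contained in that of $M_{\varsigma,2}$ (your explicit formula $\vec{\mathfrak b}(\vec a)=M_{\varsigma,2}^{-1}M_{\varsigma,1}\vec a$ is just the concrete witness for this containment). The specialization to $h=n'=1$ is handled identically in both.
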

\begin{proof}
When $M_{\varsigma,2}$ is invertible,
$M_{\varsigma,2}$ is surjective.
Thus, the image of $M_{\varsigma,1}$ is contained in that of $M_{\varsigma,2}$.

When $h=n'=1$, $M_{\varsigma,2}$ is just an element of a finite field. 
Hence, it is invertible if and only if it is non-zero.
\end{proof}

\subsection{Lemma for recoverability}
We can relax the recoverability condition from  Definition \ref{DD5} as follows:
\begin{lemma}
\Label{thm:classical robust main}
The following three conditions are equivalent: 
\begin{enumerate}
\item 
The messages are recoverable for Eve's attack on $E_A$ by  $E_P$. 

\item 
There exists a function $f_{\vec 0_{n'}}:\mathbb{F}_{q}^{\left|\Ed\right|}\rightarrow\mathbb{F}_{q}^{n}$
satisfying 
\begin{equation}\Label{eq robustness 11}
f_{\vec 0_{n'}}\left(M'_{\iota} \cdot\left(
\vec{a},
\vec{0}_{n'}
\vec{c}
\right)^T\right)=\vec{a}
\end{equation}
for all $\vec{a}\in\mathbb{F}_{q}^{n}$ and $\vec{c}\in\mathbb{F}_{q}^{h}$.

\item 
There exist an $n$-by-$\left|\Ed\right|$ matrix $\mathfrak M_1$ and an $n$-by-$n'$
matrix $\mathfrak M_2$ such that the relation 
\begin{equation}
\mathfrak M_1 \cdot M'_{\iota}\cdot\left(
\vec{a},
\vec{b},
\vec{c}
\right)^T
=\vec{a}+\mathfrak M_2 \cdot\vec{b}
\Label{eq:sec classical subsec robustness Y M iae =00003D i+za}
\end{equation}
holds for any vectors $\vec{a}\in\FF_{q}^{n}$, 
$\vec{b} \in\FF_{q}^{n'}$,
and $\vec{c}\in\FF_{q}^{h}$.

\end{enumerate}
\end{lemma}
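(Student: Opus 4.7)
The plan is to establish the equivalence by proving the cyclic chain $(1) \Rightarrow (2) \Rightarrow (3) \Rightarrow (1)$. The direction $(1) \Rightarrow (2)$ is immediate: specialize the $\vec{b}$ in Definition \ref{DD5} to $\vec{0}_{n'}$ and take $f_{\vec{0}_{n'}}$ to be the corresponding decoder. The direction $(3) \Rightarrow (1)$ is a direct construction: given $\mathfrak{M}_1, \mathfrak{M}_2$ satisfying \eqref{eq:sec classical subsec robustness Y M iae =00003D i+za}, set $f_{\vec{b}}(\vec{y}) := \mathfrak{M}_1 \vec{y} - \mathfrak{M}_2 \vec{b}$; plugging in $\vec{y} = M'_\iota(\vec{a},\vec{b},\vec{c})^T$ and using \eqref{eq:sec classical subsec robustness Y M iae =00003D i+za} directly yields $\vec{a}$.

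The substantive step is $(2) \Rightarrow (3)$, where an a priori nonlinear decoder $f_{\vec{0}_{n'}}$ must be upgraded to two linear matrices. I would first partition the columns as $M'_\iota = (M'_{\iota,1}, M'_{\iota,2}, M'_{\iota,3})$ with column counts $n$, $n'$, $h$, so that $M'_\iota (\vec{a},\vec{0}_{n'},\vec{c})^T = M'_{\iota,1}\vec{a} + M'_{\iota,3}\vec{c}$, and then extract the linear-algebra content of condition (2). Setting $\vec{c} = \vec{0}_h$ gives $f_{\vec{0}_{n'}}(M'_{\iota,1}\vec{a}) = \vec{a}$, which forces $M'_{\iota,1}$ to be injective; setting $\vec{a} = \vec{0}_n$ gives $f_{\vec{0}_{n'}}(M'_{\iota,3}\vec{c}) = \vec{0}_n$ for every $\vec{c}$. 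Combining these, any equality $M'_{\iota,1}\vec{a} = M'_{\iota,3}\vec{c}$ would force $\vec{a} = f_{\vec{0}_{n'}}(M'_{\iota,1}\vec{a}) = f_{\vec{0}_{n'}}(M'_{\iota,3}\vec{c}) = \vec{0}_n$, establishing the direct-sum property $\mathrm{Im}(M'_{\iota,1}) \cap \mathrm{Im}(M'_{\iota,3}) = \{\vec{0}\}$.

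Next, with $V := \mathrm{Im}(M'_{\iota,1}) \oplus \mathrm{Im}(M'_{\iota,3})$ a genuine internal direct sum, I would define a linear map on $V$ by $\mathfrak{M}_1(M'_{\iota,1}\vec{a} + M'_{\iota,3}\vec{c}) := \vec{a}$, which is well defined thanks to the direct-sum decomposition together with injectivity of $M'_{\iota,1}$, and then extend $\mathfrak{M}_1$ to a linear map on all of $\FF_q^{h'}$ by declaring it zero on any chosen complement of $V$. Setting $\mathfrak{M}_2 := \mathfrak{M}_1 M'_{\iota,2}$, a direct calculation gives $\mathfrak{M}_1 M'_{\iota,1} = I_n$, $\mathfrak{M}_1 M'_{\iota,3} = 0$, and $\mathfrak{M}_1 M'_{\iota,2} = \mathfrak{M}_2$, so $\mathfrak{M}_1 M'_\iota (\vec{a},\vec{b},\vec{c})^T = \vec{a} + \mathfrak{M}_2 \vec{b}$ for every triple $(\vec{a}, \vec{b}, \vec{c})$, establishing (3).

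The only real hurdle is this linearization step: one must promote the arbitrary function $f_{\vec{0}_{n'}}$ to a linear operator, and it is conceivable a priori that no linear decoder exists even when a nonlinear one does. The direct-sum property extracted from condition (2) is exactly what removes this obstacle, after which both $(3) \Rightarrow (1)$ and $(1) \Rightarrow (2)$ are purely formal.
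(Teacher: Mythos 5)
Your proof is correct and follows essentially the same route as the paper: the directions $(3)\Rightarrow(1)\Rightarrow(2)$ are dispatched as formal specializations, and the substance lies in $(2)\Rightarrow(3)$, where the a priori nonlinear decoder is shown to be replaceable by a linear one. Your explicit verification of the injectivity of $M'_{\iota,1}$ and of the trivial intersection $\mathrm{Im}(M'_{\iota,1})\cap\mathrm{Im}(M'_{\iota,3})=\{\vec 0\}$ simply spells out in detail what the paper asserts in one sentence, namely that condition (2) forces $f_{\vec 0_{n'}}$ to act linearly on the image of $M'_{\iota}$, after which both arguments define $\mathfrak M_2$ as the action of $\mathfrak M_1$ on the shared-randomness columns.
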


The last condition in this lemma means that 
if there exists a decoder, it can be always chosen as a linear decoder.

\begin{IEEEproof}
Since the directions 3)$\Rightarrow$1)$\Rightarrow$2) is trivial, we show only 
2)$\Rightarrow$3).

Assume 2).
We easily find that $f_{\vec 0_{n'}}$ can be restricted to be linear since the condition (\ref{eq robustness 11}) demands the function $f_{\vec 0_{n'}}$ to be linear on the region expressed by the form $M'_{\iota} \cdot(
\vec{a},
\vec{0}_{n'}
\vec{c}
)^T$  for any vectors $\vec{a}\in\FF_{q}^{n}$, 
and $\vec{c}\in\FF_{q}^{h}$.
Hence, $f_{\vec 0_{n'}}$ on the image of $M'_{\iota}$ can be written as an $n$-by-$\left|\Ed\right|$ matrix $\mathfrak M_1$.
Since the map
$ \vec{b}\mapsto
\mathfrak M_1\cdot M'_{\iota}\cdot\left(
\vec 0_n,
\vec{b},\vec 0_h
\right)^T$ 
is linear, 
there exists an $n$-by-$n'$ matrix $\mathfrak M_2$ such that 
$\mathfrak M_2 \vec{b}=\mathfrak M_1\cdot M'_{\iota}\cdot
\left(
\vec 0_n,
\vec{b},\vec 0_h
\right)^T$. 
Thus,
\begin{align*}
\mathfrak M_1\cdot M'_{\iota}\cdot\left(
\vec{a},
\vec{b},
\vec{c}
\right)^T
=&
\mathfrak M_1\cdot M'_{\iota}\cdot\left(
\vec{a},
\vec 0_{n'},
\vec{c}
\right)^T
+
\mathfrak M_1\cdot M'_{\iota}\cdot\left(
\vec 0_n,
\vec{b},\vec 0_h
\right)^T \\
=&\vec{a}+\mathfrak M_2 \vec{b},
\end{align*}
which implies 3).
\end{IEEEproof}

\section{Constructions of matrices describing network}\Label{AConst}
In this appendix, we concretely construct the matrices describing the network structure.

\subsection{Construction of $M_0$}\Label{a1}

The definition of
input edges and shared-randomness edges determine the coefficients
$\left\{ m_0\left(j,k\right)\right\} _{j,k}$ for $1\le j\le n+l$
as follows: For $1\le j\le n$, $\mathbf{e}\left(j\right)$ is an
input edge, that is, $\mathbf{e}\left(j\right)\in E_{I}$.
Thus, the definition of input edges determines $\left\{ m_0\left(j,k\right)\right\} _{k=1}^{n+n'}$ as 
\begin{equation}
\left\{ m_0\left(j,k\right)\right\} _{k=1}^{n+n'}=(\vec 0_{j-1},1,\vec 0_{n+n'-j})\quad\mbox{for }1\le j\le n.\Label{eq:M-ik for 1<i<n}
\end{equation}

For $n+1\le j\le n+l$, $\mathbf{e}\left(j\right)$ is a shared-randomness
edge, that is, $\mathbf{e}\left(j\right)\in E_{R}$. 
Hence, there uniquely exists an integer $j' \in [1,n']$ such that
$n+\sum_{k'=1}^{j'-1}l_{k'}<  j\le n+\sum_{k'=1}^{j'}l_{k'}$.
Thus, the definition of shared-randomness edges determines 
$\left\{ m_0\left(j,k\right)\right\}_{k=1}^{n+n'}$ as
\begin{align}
\left\{ m_0\left(j,k\right)\right\} _{k=1}^{n+n'} 
& =(\vec 0_{n+i'-1},1,\vec 0_{n'-i'})
\quad\mbox{for }n< j\le n+l.
\Label{eq:M-ik for n<i<n+l}
\end{align}
By substituting the expression \eqref{H204} of $Y_j$  into 
the relation 
 \eqref{eq:y(i)=00003Dsum a-ij y(j)},
 we derive the recurrence relation
of $m_0\left(j,k\right)$ as
\begin{equation}
m_0\left(j,k\right)=\sum_{k'<j}\theta_{j,k'}m_0\left(k',k\right)
\Label{eq:M-ik=00003Dsum a-ij M-jk}
\quad\mbox{for }n+l< j\le N+2n+l.
\end{equation}
Note that $M_0$ is a matrix which identifies the relation between the  character transferred on the edges and the combination of  messages and shared-secure-random number  in the case that there is no disturbance for every channel.
Therefore, we can use  Eq.\eqref{eq:y(i)=00003Dsum a-ij y(j)}
and \eqref{H204}.

The Eqs. \eqref{eq:M-ik for 1<i<n},\eqref{eq:M-ik for n<i<n+l},
and \eqref{eq:M-ik=00003Dsum a-ij M-jk} enable us 
to evaluate  all the coefficients of the $(N+2n+l)\times(n+n')$ matrix $M_0$, i.e. $\left\{ m_0\left(j,k\right)\right\} _{j,k}$,
recursively.

\subsection{Construction of $M$}\Label{a2}

In the case of $1\leq  j \leq n+l$, $Y_j$ is not affected by disturbances by definition. 
Therefore, 
\begin{eqnarray}
m\left(j,k\right)&=&
\left\{
\begin{array}{cl}
m_0\left(j,k\right)& \makebox{for $1\leq  j \leq n+l$ and $1\leq k\leq n+n'$} 
\\
0& \makebox{for $1\leq  j \leq n+l$ and $n+n'< k\leq n+n'+h$} 
\end{array}
\right.
\label{eq:G_tmp_1}
\end{eqnarray}

$M$ is a matrix which identifies the relation between the  character transferred on the edges and the combination of  messages, shared-secure-random number and injected character.
That means, we consider the case that there may exist disturbances. Therefore,
we have to use the relation 
\[
Y_j =\sum_{k<j}\theta_{j,k}
Y_k'
\quad\mbox{for }n+l< j\le N+2n+l
\]
instead of the the relation Eq.\eqref{eq:y(i)=00003Dsum a-ij y(j)}.
By substituting the expressions  \eqref{eq:sec classical subsec security y i =00003D M i k x k}
and \eqref{eq:def y'}  of $Y_j$ and $Y_j'$ into the above relation,
we obtain the relation
\begin{equation}
m\left(j,k\right)=\sum_{k'<j}\theta_{j,k'}m'(k',k)
\quad\mbox{for }n+l< j\le N+2n+l.
\Label{eq:M ik=00003D sum theta ij M jk}
\end{equation}
By combining  \eqref{eq:M' ik =00003D delta M ik} for the above relation, 
we derive the following recurrence relations for $m\left(j,k\right)$:
\begin{equation}
m\left(j,k\right)=\sum_{k'<j}
\theta_{j,k'}m\left(k',k\right)+\sum_{k'=1}^{h}\theta_{j,\varsigma\left(k'\right)}(\delta_{k,n+n'+k'}-m\left(\varsigma\left(k'\right),k\right))\Theta(j-\varsigma\left(k'\right)-1),
\Label{eq:M ik=00003Dsum theta ij M jk sum theta delta}
\end{equation}
for $n+l< j \leq N+2n+l$,
where $\Theta(y)$ is a step function such that $\Theta(y)=0$ ($\Theta(y)=1$) if $y<0$ ($y\geq 1$).

The Eqs. \eqref{eq:G_tmp_1} and \eqref{eq:M ik=00003Dsum theta ij M jk sum theta delta} enable us 
to evaluate  all the coefficients of the $(N+2n+l)\times(n+n'+h)$ matrix $M$ recursively.

\end{document}